\newcommand{\dg}{^{\dagger}}
\newcommand{\ndg}{^{\phantom{}}}
\renewcommand{\vec}[1]{{\boldsymbol{#1}}}
\newcommand{\op}{\hat}
\newcommand{\wt}{\widetilde}
\newcommand{\MaxMin}{\textsf{MaxMin}}
\newcommand{\Sgn}{\textsf{Sgn}}
\newcommand{\smat}[1]{\left(
    \begin{smallmatrix}
      #1
    \end{smallmatrix}
  \right)}
\begin{document}
\date{\today}

\title{Dynamical quantum phase transitions: \\Role of topological
  nodes in wavefunction overlaps}

\author{Zhoushen Huang} \affiliation{Institute for Materials Science,
  Los Alamos National Laboratory, Los Alamos, NM 87545, USA}
\email{zsh@lanl.gov}
\author{Alexander V.~Balatsky} \affiliation{Institute for Materials
  Science, Los Alamos National Laboratory, Los Alamos, NM 87545, USA}
\affiliation{NORDITA, Roslagstullsbacken 23, SE-106 91\ \ Stockholm,
  Sweden}
\email{avb@nordita.org}

\begin{abstract}
  A sudden quantum quench of a Bloch band from one topological phase
  toward another has been shown to exhibit an intimate connection with
  the notion of a dynamical quantum phase transition (DQPT), where the
  returning probability of the quenched state to the initial
  state---i.e. the Loschmidt echo---vanishes at critical times
  $\{t^{*}\}$. Analytical results so far are limited to two-band
  models, leaving the exact relation between topology and DQPT
  unclear. In this work, we show that for a general multi-band system,
  a robust DQPT relies on the existence of nodes (i.e. zeros) in the
  wavefunction overlap between the initial band and the post-quench
  energy eigenstates. These nodes are topologically protected if the
  two participating wavefunctions have distinctive topological
  indices. We demonstrate these ideas in detail for both one and two
  spatial dimensions using a three-band generalized Hofstadter
  model. We also discuss possible experimental observations.
\end{abstract}
\maketitle

\paragraph{Introduction} Advances in experimental techniques, in
particular in cold atom systems \cite{Bloch08, Bloch12, Langen15},
have reinvigorated recent interest in quantum dynamics
\cite{Polkovnikov11}.  A paradigmatic setup in this context is a
quantum quench \cite{Igloi00, Altman02, Polkovnikov02, Sengupta04,
  Calabrese06}, wherein a system is prepared as an eigenstate
$|\Psi\rangle$ of an initial Hamiltonian $H_I$, but evolved under a
different Hamiltonian $H_F$. In a slow ramp \cite{Barankov08,
  Pollmann10}, one has in addition the control over how fast the
switching between $H_I$ and $H_F$ can be, as well as what path to take
in the space of Hamiltonians. Since $|\Psi\rangle$ typically consists
of many excited states of $H_F$ with a non-thermal distribution, its
time evolution provides a unique venue for investigating issues in
nonequilibrium quantum statistical mechanics such as thermalization,
equilibration, or the lack thereof \cite{Polkovnikov11, Nandkishore15,
  Eisert15, DAlessio15, Essler16, Millen16}. A particularly fruitful
approach to understanding dynamics after a quantum quench is by
exploiting the formal similarity between the time evolution operator
$\exp(-iHt)$, and the thermal density operator $\exp(-\beta H)$. This
enables one to leverage and extend notions in equilibrium statistical
mechanics to the realm of quantum dynamics. In this spirit, the return
amplitude
\begin{gather}
  \label{gt-def}
  G(t) = \langle \Psi | e^{-iH_F t} | \Psi\rangle = \sum_n
  \left|\langle \Phi^{(n)} | \Psi \rangle\right|^2 e^{-iE_n t}
\end{gather}
can be thought of as a partition function along imaginary temperature
$\beta = it$, with the prepared state $|\Psi\rangle$ as a fixed
boundary \cite{LeClair95}. Here $|\Phi^{(n)}\rangle$ and $E_n$ are
eigenstates and eigenvalues of the post-quench $H_F$,
respectively. Heyl \emph{et al} showed \cite{Heyl13} that analogous to
the thermal free energy, a dynamical free energy density
\cite{Fagotti13} can be defined, $f(t) = -\log G(t)/L$, where $L$ is
system size. Singularities in $f$ then signifies the onset of what was
proposed as a \emph{dynamical quantum phase transition} (DQPT). In
statistical mechanics, phase transitions are closely related to the
zeros of the partition function---known as Fisher zeros---in the
complex temperature plane \cite{Fisher65}. Historically, Yang and Lee
were the first to connect phase transitions with zeros of the
partition function in complexified parameter space \cite{Yang52,
  *Lee52}.  While Fisher zeros are always complex for finite systems,
they may coalesce into a continuum (line in one parameter dimension,
area in two parameter dimensions, etc) that cuts through the real
temperature axis in the thermodynamic limit, giving rise to an
equilibrium phase transition. Investigations on DQPT have followed a
similar route by first solving the Fisher zeros in the complex
temperature plane, and then identifying conditions for them to cross
the axis of imaginary temperature (real time). DQPT is thus
mathematically identified as $G(t^{*})=0$ at critical time(s) $t^{*}$
\footnote{The formal similarity between $e^{-iHt}$ and $e^{-\beta H}$
  is routinely used in field theoretic calculations such as
  correlators, where the thermal and quantum-dynamic results are
  related through Wick rotation. Knowledge in one domain can then be
  transcribed into the other without additional calculation. This
  ``equivalence'', however, relies on the existence of an analytic
  continuation between $it$ and $\beta$ in the complex temperature
  plane. The presence of closed and/or semi-infinite Fisher zero lines
  divides the complex temperature plane into disjoint regions, and the
  correspondence between $it$ and $\beta$ will break down if they
  reside in different regions. For quenches that exhibit DQPT, as
  pointed out in Ref.~\cite{Heyl13}, this implies that non-equilibrium
  time evolution (i.e. along the imaginary temperature axis) is in
  general not governed by the equilibrium thermodynamics (along the
  real temperature axis).}. DQPTs occur in both integrable
\cite{Heyl13, Pozsgay13, Vajna14, Hickey14, Vajna15, Schmitt15,
  Divakaran16, Sharma16, Budich16} and non-integrable
\cite{Karrasch13, Fagotti13, Kriel14, Andraschko14, Sharma15} spin
systems for quenches across quantum critical points. They can further
be classified by discontinuities in different orders of time
derivatives of $f(t)$ \cite{Canovi14, Schmitt15} \emph{vis-a-vis}
their thermal counterparts.  Very recently DQPTs have also been shown
to constitute \emph{unstable} fixed points in the renormalization
group flow, and are therefore subject to the notion of universality
class and scaling \cite{Heyl15}.

Physically, the return amplitude $G(t)$ is related to the power
spectrum of work performed during a quench,
$G(\omega) = \sum_n \left| \langle \Phi^{(n)} | \Psi\rangle \right|^2
\delta\left(\omega - (E_n - E_I)\right)$, which is the Fourier
transform of $G(t)e^{-iE_It}$, and $E_I$ is the energy of the initial
state \cite{Talkner07, Campisi11, Jarzynski97, Deffner15}. This in
principle makes $G(t)$, and hence DQPT, a measurable phenomenon. A
pratically more viable route to experimental verification is through
measuring time evolution of thermodynamic quantities, which may
exhibit post-quench oscillations at a time scale commensurate with the
DQPT critical time $t^{*}$, and universal scaling near $t^{*}$
\cite{Heyl14}. In band systems, as we will show, they may also be
identified by a complete depletion at $t^{*}$ of sublattice or
spin-polarized particle density at certain crystal momenta, see
Eq.~\ref{rho-kn}.

Parallel to the development of DQPT as the dynamical analogue of
equilibrium phase transitions is the investigation on its relation
with topology \cite{Vajna14, Hickey14, Vajna15, Schmitt15}. This issue
arises naturally because in the transverse field Ising model, in which
DQPT was first discovered, the quantum critical point can be mapped to
a topological phase transition at which the quantized Berry phase of
the \emph{fermionized} Hamiltonian jumps between $0$ and $\pi$. DQPT
in this two-band fermion model was attributed to the occurrence of
``population inversion'' \cite{Heyl13} where it becomes equally
probable to find the initial state in either of the two post-quench
bands, a consequence of the Berry phase jump \cite{Vajna15}. The same
analysis has been extended to various two-band models in one- and
two-spatial dimensions (1D/2D) \cite{Schmitt15, Vajna15,Divakaran16,
  Hickey14, Sharma16}, where definitive connection was found between
DQPT and quench across topological transitions, although some
complications exist \footnote{E.g., in 2D integer quantum Hall
  systems, DQPT is related to the change in the \emph{absolute value}
  of Chern number $|C|$ instead of $C$ \cite{Vajna15}}. DQPTs have also
been demonstrated to occur for quenches within the same topological
phase \cite{Vajna14, Andraschko14, Hickey14}, although from the point
of view of topological protection, these are not robust as they
require fine-tuning of the Hamiltonians.

The purpose of this work is to develop a general theory \emph{beyond
  two band models} to clarify the relation between \emph{robust} DQPT
and topology. We will show that a robust DQPT---one which is
insensitive to the details of the pre- and post-quench Hamiltonians
other than the phases to which they belong---relies on the existence
of zeros (or nodes) in the wavefunction overlap between the initial
band and all eigenstates of the post-quench Hamiltonian. These nodes
are topologically protected if the two participating wavefunctions
have distinctive topological indices: for example, the Chern number
difference $|C_{\psi} - C_{\phi}|$ provides a lower bound to the
number of $\vec k$-space nodes in the overlap
$\langle \phi_{\vec k}|\psi_{\vec k}\rangle$, see
Theorem~\ref{thm-2d}.  These considerations lead to the notion of
\emph{topological} and \emph{symmetry-protected DQPTs} which we will
demonstrate in detail using a $3$-band generalized Hofstadter model.
Analysis of a 1D 3-band model exhibiting symmetry-protected DQPT can
be found in Supplemental Materials (SM).
\paragraph{Amplitude and phase conditions of DQPT}
The DQPT condition $G(t^{*}) = 0$ can be interpreted geometrically as
the complex numbers
$z_n(t) = \left|\langle \Phi^{(n)} | \Psi\rangle\right|^2 e^{-iE_n t}$
forming a closed polygon in the complex plane at $t^{*}$, see
Fig.~\ref{fig:z-polygon}.  The time-independent content of this
observation is that the amplitudes $\{\left|z_n\right|\}$ satisfy a
generalized triangle inequality,
$\sum_{m\neq n} |z_m| \ge |z_n| \ \forall n$. Invoking
$\langle \Psi | \Psi\rangle = \sum_n |z_n| = 1$, one has the
\emph{amplitude condition},
\begin{gather}
  \label{tri-ineq}
  |z_n| = \left| \langle \Phi^{(n)} | \Psi\rangle \right|^2
  \stackrel{!}{\le} \frac{1}{2} \quad \forall n\ .
\end{gather}
For $\{|z_n|\}$ that satisfy Eq.~\ref{tri-ineq}, solutions to
$\sum_n |z_n| e^{-i\varphi_n} = 0$ exist and form a subspace
$\mathcal{M}_{\{|z_n|\}}$ on the $N$-torus,
\begin{gather}
  \label{Mz}
  \mathcal{M}_{\{|z_n|\}}\in \mathcal{T}^N: \left\{
    \{e^{-i\varphi_n}\} \bigg| \sum_{n=1}^N |z_n|e^{-i\varphi_n} = 0
  \right\}\ .
\end{gather}
To set off DQPT, the dynamical phases must be able to evolve into
$\mathcal{M}_{\{|z_n|\}}$. This constitutes the \emph{phase
  condition},
\begin{gather}
  \label{phase-cond}
  \exists t^{*} : \{e^{-iE_n t^{*}}\} \in \mathcal{M}_{\{|z_n|\}} \ .
\end{gather}
DQPT requires both conditions to hold simultaneously.

\begin{figure}
  \centering
  \includegraphics[width=.34\textwidth,trim={20 20 20 20},clip]{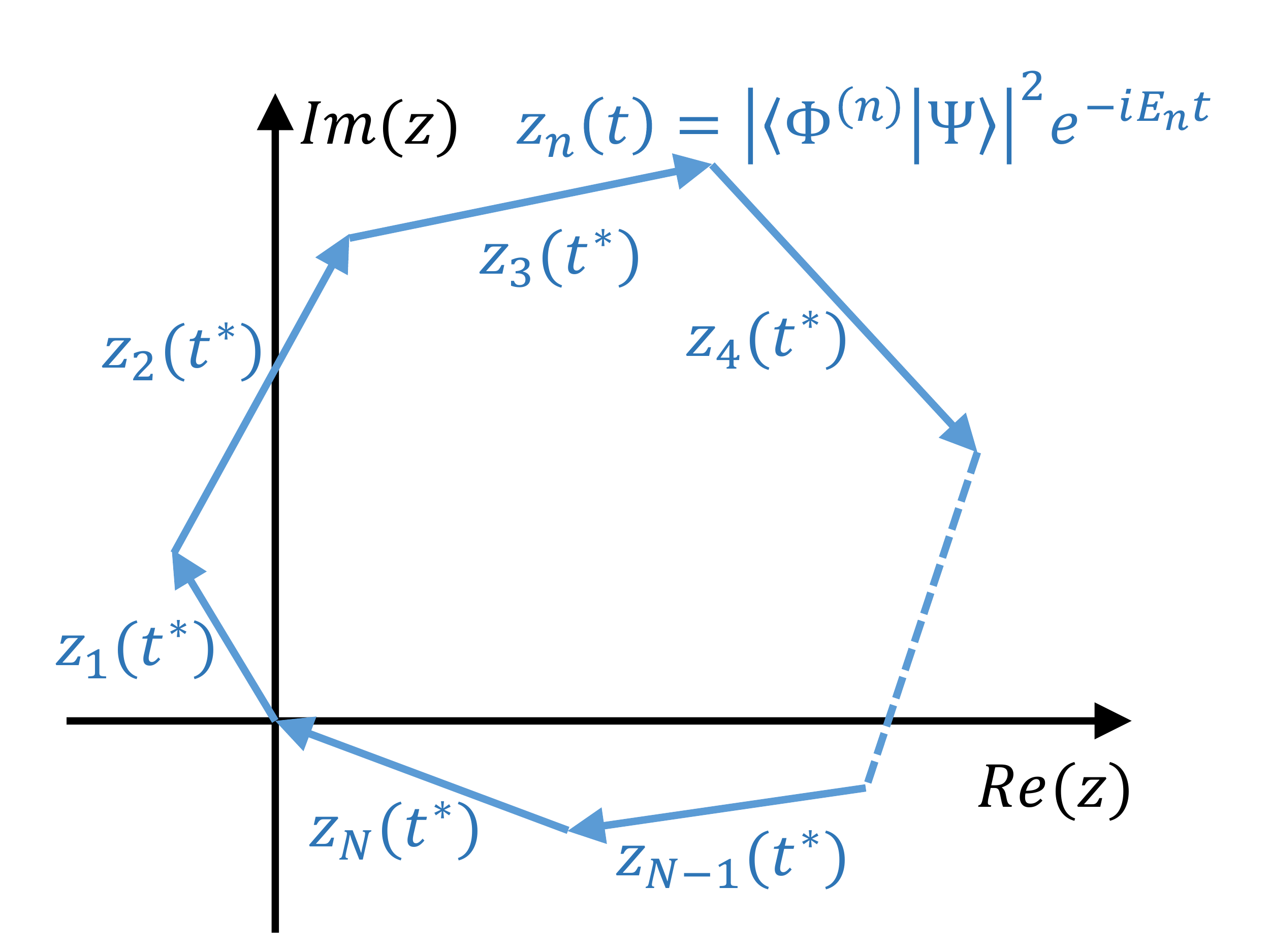}
  \caption{(Color online) Geometric representation of the DQPT
    condition $G(t^{*}) = \sum_n z_n(t^{*})=0$. $\{z_n(t^{*})\}$
    must form a closed polygon in the complex plane, and hence satisfy
    a generalized triangle inequality $|z_n| \le \sum_{m\neq n}|z_m|$.
    Wavefunction normalization
    $\langle \Psi | \Psi\rangle = \sum_n |z_n|=1$ then leads to
    $|z_n| \le \frac{1}{2}$ }
  \label{fig:z-polygon}
\end{figure}

\paragraph{Phase ergodicity in few-level systems} At first glance, the
phase condition may seem to be the more stringent one. After a quench
across a quantum phase transition, a many-body initial state
$|\Psi\rangle$ typically has overlap with an extensive amount of
eigenstates of the post-quench Hamiltonian $H_F$ and therefore the
amplitudes $\langle \Phi^{(n)} | \Psi\rangle$ are generically
exponentially small in system size, rendering Eq.~\ref{tri-ineq}
satisfied in general. Existence of DQPT then relies entirely on the
phase condition.  Integrable systems, however, point to the
possibility that the amplitude and phase conditions may be intricately
related and traded for one another. Such systems can effectively be
broken down into few-level subsystems labeled by quantum numbers
$\vec k$, say $N_{\vec k}$ levels $\{E_{\vec k,n}\}$ for
$n = 1,2, \cdots, N_{\vec k}$ in the $\vec k$ sector. Correspondingly
$G(t) = \prod_{\vec k}G(\vec k,t)$. For the transverse field Ising
model, Kitaev's honeycomb model \cite{Kitaev06}, and band insulator
models, $\vec k$ is the Bloch momentum. It is known that as long as
the $N_{\vec k} - 1$ gaps,
$\Delta_{\vec k,n} = E_{\vec k,n+1} - E_{\vec k,n}$, are not
rationally related, the dynamical phases $\{e^{-iE_{\vec k,n}t}\}$ are
\emph{ergodic} on the $N_{\vec k}$-torus up to an overall phase
\cite{Cornfeld12}, and \emph{will} therefore evolve into its subspace
$\mathcal{M}_{\{|z_n|\}}$ (Eq.~\ref{Mz}). Phase ergodicity thus
guarantees the phase condition Eq.~\ref{phase-cond}, and DQPT in each
$\vec k$ sector depends entirely on the amplitude condition.

\paragraph{Robust DQPT protected by nodes in wavefunction overlap}
Hereafter, we focus on quenches in multi-band Bloch systems with $N_B$
bands.  For simplicity we use a single filled band
$|\psi(\vec k)\rangle$ as the pre-quench state. Generalization to
multiple filled bands is straightforward.  The post-quench return
amplitude is $G(t) = \prod_{\vec k}G(\vec k, t)$,
\begin{gather}
  G(\vec k, t) = \sum_{n=1}^{N_B} \left| \langle \phi^{(n)}(\vec k) |
    \psi(\vec k)\rangle \right|^2 e^{-i\varepsilon_n(\vec k) t}\ ,
\end{gather}
where $|\phi^{(n)}(\vec k)\rangle$ and $\varepsilon_n(\vec k)$ are
respectively the post-quench energy eigenstates and
eigenvalues. \emph{Assume phase ergodicity holds at all $\vec k$
  points}---this is a very relaxed requirement provided there is no
degeneracy at any $\vec k$ point.  Then DQPT amounts to the existence
of at least one $\vec k$ at which Eq.~\ref{tri-ineq} is satisfied,
namely
\begin{gather}
  \label{tri-ineq-k}
  \exists \vec k \in \textsf{Brillouin Zone}: \left|\langle
    \phi^{(n)}(\vec k) | \psi(\vec k)\rangle\right|^2 \le \frac{1}{2}\,
  \forall n \ .
\end{gather}

We now discuss how Eq.~\ref{tri-ineq-k} and hence DQPT can arise from
nodes in wavefunction overlaps. Note that this is \emph{not} the only
way to get DQPT. Its virtue lies in its robustness against
perturbations to the Hamiltonians. In SM, we provide examples where
DQPTs with no overlap node can be easily avoided simply by Hamiltonian
parameter tuning without crossing a phase boundary. The overlap nodes
are, on the other hand, typically topologically protected, a point we
will return to later.  Now consider the following quench. Let
$a = 1,2, \cdots, N_B$ label ``sublattices'', which in general may
also include other degrees of freedom, e.g., orbitals, spins, etc.
Prepare the pre-quench state by filling $a=1$,
\begin{gather}
  |\Psi\rangle = \prod_{\vec r} \psi_{\vec r,1}\dg | \emptyset\rangle =
  \prod_{\vec k} \psi_{\vec k,1}\dg |\emptyset\rangle\ ,
\end{gather}
where $\psi_{\vec r,a}\dg$ creates an electron on sublattice $a$ in
unit cell $\vec r$, $|\emptyset\rangle$ is the vacuum,
$\psi_{\vec k,a}\dg = \frac{1}{\sqrt{N}}\sum_{\vec r} e^{i\vec k\cdot
  \vec r} \psi_{\vec r,a}\dg$, and $N$ is the total number of unit
cells. The system is then time-evolved under an integer quantum Hall
Hamiltonian $\op H = \sum_{\vec k} \op H(\vec k)$ where
$\op H(\vec k) = \sum_{a,b = 1}^{N_B} H_{a,b}\ndg(\vec k) \psi_{\vec
  k, a}\dg \psi_{\vec k,b}\ndg = \sum_{n=1}^{N_B} \varepsilon_{\vec
  k,n}\ndg\phi_{\vec k, n}\dg \phi_{\vec k, n}\ndg$, and we assume the
Chern number of all bands of $\op H(\vec k)$ are non-zero,
$C_n \neq 0\, \forall n$. The overlap in Eq.~\ref{tri-ineq-k} is
$\langle \emptyset | \phi_{\vec k, n}\ndg \psi_{\vec k, 1}\dg
|\emptyset\rangle = \phi^{(n)}_1(\vec k)^{*}$, where
$\phi^{(n)}_a(\vec k) = \langle a | \phi^{(n)}(\vec k)\rangle$ is the
$a^{th}$ component of
$|\phi^{(n)}(\vec k)\rangle = (\phi^{(n)}_1(\vec k), \phi^{(n)}_2(\vec
k), \cdots, \phi^{(n)}_{N_B}(\vec k))^t$, an eigenvector of the
post-quench Hamiltonian matrix $H(\vec k)$. It is known that any
component $\phi_a^{(n)}(\vec k) \,\forall a$ must have \emph{at least}
$|C_n|$ zeros in the Brillouin zone \cite{Kohmoto85}, see also
Thm.~\ref{thm-2d}. Now assume at an arbitrary Bloch momentum
$\vec k_0$, $\phi^{(n_1)}_1$ has the highest weight:
$|\phi^{(n_1)}_1(\vec k_0)| > |\phi^{(n\neq n_1)}_1(\vec k_0)|$. The
existence of node means $\phi^{(n_1)}_{1}$ cannot remain as the
highest weight element over the entire Brillouin zone, and hence must
switch rank with the second highest weight element, say
$\phi^{(n_2)}_1$, at some point $\vec k_c$:
$|\phi^{(n_1)}_1(\vec k_c)| = |\phi^{(n_2)}_1(\vec k_c)| \ge
|\phi^{(n\neq n_1, n_2)}_1(\vec k_c)|$ \footnote{for two band models,
  $\vec k_c$ is where the so-called ``population inversion'' occurs.}.
Together with the normalization
$\langle \emptyset | \psi_{\vec k, 1}\dg \psi_{\vec k,
  1}\ndg|\emptyset\rangle = \sum_n |\phi^{(n)}_1|^2 = 1$, one
concludes that at $\vec k = \vec k_c$, Eq.~\ref{tri-ineq-k} is
satisfied.

Note that in this case, the return amplitude $G(\vec k, t)$ is related
to the $\vec k$-space sublattice particle density,
\begin{gather}
  \label{rho-kn}
  \rho_{\vec k,a}(t) \equiv \langle \Psi(t) | \psi_{\vec k, a}\dg \psi_{\vec
    k,a}\ndg | \Psi(t)\rangle = \left|G(\vec k,t)\right|^2\ .
\end{gather}
A DQPT can thus be identified by $\rho_{\vec k, a}(t^{*}) = 0$, i.e.,
a complete depletion of particles with momentum $\vec k$ on sublattice
$a$ (or orbital, spin, etc.), which may be experimentally measurable.

The argument above for node-protected DQPT applies to any
pre-/post-quench combinations. In general, if the overlap of the
pre-quench band $|\psi(\vec k)\rangle$ with \emph{every} eigenstate
$|\phi^{(n)}(\vec k)\rangle$ of $H_F(\vec k)$ has nodes in the
Brillouin zone, then the triangle inequality Eq.~\ref{tri-ineq-k} is
guaranteed, and a robust DQPT would occur. This criterion can be
written in a form more amenable to numerical test,
\begin{gather}
  \label{maxmin1}
  \psi_{\MaxMin} \equiv \max_n\Bigl[\min_{\vec k} \bigl| \langle
  \phi^{(n)}(\vec k) | \psi(\vec k)\rangle \bigr|\Bigr]\ , \\
  \label{maxmin2}
  \psi_{\MaxMin} = 0 \Leftrightarrow \text{Robust DQPT}\ .
\end{gather}

\paragraph{Topological protection of nodes in wavefunction overlaps}
There is a curious connection between wavefunction zeros and
quantization.  In elementary quantum mechanics, nodes in the radial
wavefunction is related to the principal quantum number
\cite{LandauQM}. In continuum integer quantum Hall systems, the number
of nodes in the wavefunction
$\psi(\vec r) = \langle \vec r | \psi\rangle$ for $\vec r$ in a
magnetic unit cell is given by its Chern number magnitude $|C|$
\cite{Kohmoto85}. These nodes persist even in the presence of weak
disorder \cite{Arovas88}.  On a lattice, $|C|$ gives the number of
$\vec k$-space nodes in all wavefunction components
$\psi_a(\vec k) = \langle a | \psi(\vec k)\rangle \forall a$
\cite{Kohmoto85}, a phenomenon closely related to the energetic
spectral flow of the edge states \cite{Hatsugai93}. Note that the
relation between $C$ and wavefunction nodes relies on one participant
of the overlap, namely the basis states $|\vec r\rangle$ and
$|a\rangle$, to be topologically trivial. If both participants can be
nontrivial, the number of nodes in their overlap should depend on both
topological indices on an equal footing. Indeed we have the following
theorems, \newtheorem{theorem}{Theorem}
\begin{theorem}
  \label{thm-2d} In 2D, the overlap of Bloch bands
  $|\psi(\vec k)\rangle$ and $|\phi(\vec k)\rangle$, with Chern
  numbers $C_{\psi}$ and $C_{\phi}$ respectively, must have at least
  $|C_{\psi} - C_{\phi}|$ nodes in the Brillouin zone.
\end{theorem}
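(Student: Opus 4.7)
The plan is to interpret the overlap $f(\vec k) \equiv \langle \phi(\vec k)|\psi(\vec k)\rangle$ as a smooth section of a complex line bundle over the Brillouin-zone torus $T^2$, and then invoke the classical statement that the signed count of zeros of such a section equals the bundle's first Chern number. Concretely, I cover $T^2$ by open patches $\{U_\alpha\}$ on each of which both $|\psi(\vec k)\rangle$ and $|\phi(\vec k)\rangle$ admit smooth normalized representatives; on overlaps $U_\alpha \cap U_\beta$ these differ by $U(1)$ transition functions $g^{\psi}_{\alpha\beta}$ and $g^{\phi}_{\alpha\beta}$ whose \v{C}ech class realizes $C_\psi$ and $C_\phi$, respectively. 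The patchwise overlaps then transform as $f_\alpha = g^{\psi}_{\alpha\beta}(g^{\phi}_{\alpha\beta})^{*}\, f_\beta$, which is precisely the transition rule for a section of the tensor-product line bundle $L \equiv L_\psi \otimes L_\phi^{*}$. Since Chern classes are additive under tensor product and sign-reverse under dualization, $c_1(L) = C_\psi - C_\phi$; equivalently, writing Berry connections $A_\psi = -i\langle\psi|d\psi\rangle$ and $A_\phi = -i\langle\phi|d\phi\rangle$ within each patch, the curvature of $L$ is $F_\psi - F_\phi$, whose integral over $T^2$ equals $2\pi(C_\psi - C_\phi)$.

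The second step is to apply the Poincar\'{e}-Hopf statement for line-bundle sections on a closed oriented surface: the sum over isolated zeros of the local degree of any smooth section $s$ of $L$ equals $c_1(L)$. For our overlap $f$, the local degree at an isolated zero $\vec k_0$ is the winding number of $\arg f$ along a small loop encircling $\vec k_0$, which is a nonzero integer. Equating the signed sum of these integers to $C_\psi - C_\phi$ and then discarding the signs immediately yields the lower bound on the unsigned node count. As a sanity check, this reproduces the Kohmoto-type statement cited earlier: replacing $|\phi\rangle$ by the constant sublattice basis vector $|a\rangle$ gives $C_\phi = 0$ and recovers the bound of at least $|C_\psi|$ zeros in the component $\psi_a(\vec k)$.

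The main obstacle I anticipate is ensuring that the zeros of $f$ are actually isolated, so that the local-degree count is well-defined; in pathological or fine-tuned cases $f$ could vanish along a curve or on an open region. I would dispatch this by a transversality argument: arbitrarily small, Chern-number-preserving perturbations of $|\phi(\vec k)\rangle$ (equivalently, of $H_F$ within the same topological class) generically make the zero set $f^{-1}(0)$ a finite collection of transverse zeros, establishing the bound in the generic case. The bound then extends to the non-generic case by continuity, since smooth deformations can only create or destroy zeros in $\pm$-pairs, leaving the signed total $C_\psi - C_\phi$ invariant and hence preserving $|C_\psi - C_\phi|$ as a lower bound on the unsigned count of nodes.
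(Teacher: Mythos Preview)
Your argument is correct and rests on the same underlying idea as the paper: the overlap $\langle\phi(\vec k)|\psi(\vec k)\rangle$ is a section of the line bundle $L_\psi\otimes L_\phi^{*}$ with first Chern class $C_\psi-C_\phi$, and the signed sum of its zero-vorticities equals that Chern number, so the unsigned count is bounded below by $|C_\psi-C_\phi|$.

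The packaging, however, is genuinely different. The paper proceeds in a hands-on, Kohmoto--Hatsugai style: it builds explicit patches $R_0,R_1,\ldots$ around the zeros of a chosen component $\phi_1$, fixes the gauge of $|\psi\rangle$ by demanding $\langle\phi|\psi\rangle>0$ wherever the overlap is nonvanishing, and then reads off $C_\psi$ as $C_\phi$ plus the extra vorticities picked up at the overlap nodes. You instead invoke the general Poincar\'e--Hopf theorem for sections of a complex line bundle over a closed surface as a black box. What your route buys is brevity and a cleaner treatment of the non-generic case (you make the transversality/perturbation step explicit, whereas the paper simply asserts $|v_i|\le 1$ by regarding higher-order vortices as coincident simple ones). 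What the paper's route buys is self-containment for a physics audience: no appeal to bundle theorems, just gauge-stitching and winding numbers, and a direct connection to the familiar statement that $|C_n|$ counts zeros of a wavefunction component.
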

\begin{theorem}
  \label{thm-1d} 
  In 1D, the Berry phase $\gamma$ of a real Bloch band,
  $|\psi(k)\rangle = (\psi_1(k), \psi_2(k), \cdots)^t, \psi_a(k) \in
  \mathbb{R} \,\forall a$, is quantized to $0$ or $\pi$. The overlap
  of two real bands $|\psi(k)\rangle$ and $|\phi(k)\rangle$, with
  Berry phases $\gamma_{\psi}$ and $\gamma_{\phi}$ respectively, must
  have at least one node if $\gamma_{\psi} \neq \gamma_{\phi}$.
\end{theorem}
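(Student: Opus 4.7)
The plan is to establish quantization via a real-gauge argument and then deduce the existence of a node from the intermediate value theorem applied to the (real-valued) overlap.

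First, I would prove the quantization claim. Because $H(k)$ is a real symmetric matrix with a non-degenerate eigenvalue $\varepsilon(k)$, analytic perturbation theory lets us choose an eigenvector $|\psi(k)\rangle$ whose components are real and smooth on the closed interval $[0,2\pi]$. Bloch periodicity $H(2\pi)=H(0)$ forces $|\psi(2\pi)\rangle$ to be a unit eigenvector of $H(0)$ for the same eigenvalue, so it must differ from $|\psi(0)\rangle$ by a pure phase; reality restricts this phase to $\pm 1$, giving $|\psi(2\pi)\rangle = \eta_\psi |\psi(0)\rangle$ with $\eta_\psi \in \{+1,-1\}$. This twist is the only nontrivial gauge-invariant data of the loop. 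Converting to a strictly periodic gauge $|\tilde\psi(k)\rangle = e^{i\theta(k)}|\psi(k)\rangle$ with $\theta(2\pi)-\theta(0)$ chosen to cancel $\eta_\psi$, the standard Berry phase $\gamma_\psi = i\oint \langle \tilde\psi|\partial_k \tilde\psi\rangle\, dk$ reduces to $-[\theta(2\pi)-\theta(0)]$, which is $0$ if $\eta_\psi = +1$ and $\pi$ (mod $2\pi$) if $\eta_\psi=-1$.

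Next I would fix real gauges as above for both $|\psi(k)\rangle$ and $|\phi(k)\rangle$ and consider $f(k) = \langle \phi(k)|\psi(k)\rangle$, which is then a real, continuous function on $[0,2\pi]$. The boundary behavior inherits both twists:
\begin{equation}
f(2\pi) = \langle \phi(2\pi)|\psi(2\pi)\rangle = \eta_\phi\,\eta_\psi\,\langle \phi(0)|\psi(0)\rangle = \eta_\phi\,\eta_\psi\, f(0).
\end{equation}
If $\gamma_\psi \neq \gamma_\phi$, then $\eta_\phi\eta_\psi = -1$, so $f(2\pi) = -f(0)$. Either $f(0)=0$ already provides a node, or $f$ takes values of opposite sign at the endpoints and the intermediate value theorem forces a zero in $(0,2\pi)$; in either case $\langle \phi(k)|\psi(k)\rangle$ vanishes at some $k\in[0,2\pi)$.

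I expect the main obstacle to be the rigorous handling of the gauge in step one: one has to justify that a globally smooth \emph{real} section exists on the whole interval, verify that the residual sign $\eta_\psi$ is independent of the particular smooth continuation chosen, and connect this discrete invariant to the standard continuous line-integral definition of the Berry phase. Once that is set up cleanly, the passage from $\eta_\phi\eta_\psi = -1$ to a node via IVT is elementary; the argument also makes transparent that this node is stable under any real perturbation of the Hamiltonian that does not close the gap, since it is protected by the mismatch of the discrete invariants $\eta_\psi$ and $\eta_\phi$.
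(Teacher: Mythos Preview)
Your argument is correct, but the route differs from the paper's. You work in a single globally smooth real gauge on $[0,2\pi]$, identify the Berry phase with the $\mathbb{Z}_2$ monodromy $\eta_\psi=\pm1$ at the endpoint, and then obtain the node directly from the intermediate value theorem applied to the real continuous overlap $f(k)=\langle\phi(k)|\psi(k)\rangle$, which satisfies $f(2\pi)=\eta_\phi\eta_\psi f(0)$. The paper instead mimics its 2D proof: it partitions the Brillouin zone at the zeros of a fixed component (say $\psi_1$), fixes the real gauge piecewise by $\psi_1>0$, and reads off the Berry phase as the product of the $\pm1$ sign jumps across those nodal points; for the second statement it argues the contrapositive, showing that if $\langle\phi|\psi\rangle$ never vanishes one can use it to transport the gauge of $\psi$ to $\phi$, forcing the two sets of sign jumps---and hence the Berry phases---to coincide. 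Your approach is more elementary and makes the $\mathbb{Z}_2$ invariant and its stability under gap-preserving real perturbations immediately transparent; the paper's patchwise argument is less direct here but has the virtue of running in exact parallel with the Kohmoto-style 2D proof of Theorem~\ref{thm-2d}, so the two theorems share a common gauge-obstruction language. The gauge subtlety you flag (existence of a smooth real section on the interval and well-definedness of $\eta_\psi$) is genuine but standard for a nondegenerate eigenvalue of a real symmetric $H(k)$; once handled, the rest is indeed elementary.
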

See SM for proof. Note that symmetry protection may enforce a
Hamiltonian to be real \cite{Fang15}, leading to the real bands in
Thm.~\ref{thm-1d}. This prompts the notion of \emph{symmetry-protected
  DQPT}, reminiscent of symmetry-protected topological phases that may
be classified by topological numbers at high-symmetry hyper-surfaces
\cite{Ryu10, Alexandradinata14, Fang15}. An example will be given
later, see also SM.

\paragraph{Generalized Hofstadter model}
We demonstrate ideas discussed above using a generalized Hofstadter
model,
\begin{gather}
  \label{hof-hk}
  H(\vec k, t, m) =
  \begin{pmatrix}
    d_1 & v_1 & v_3e^{ik_y}\\
    v_1 & d_2 & v_2 \\
    v_3 e^{-ik_y} & v_2 & d_3
  \end{pmatrix} \ , \\
  \notag
  d_a = 2\cos(k_x + a \varphi) + am\ , \\
  \notag
  v_a = 1+2t\cos\left[k_x + (a +\frac{1}{2})\varphi\right]\ , \\
  \notag
  a = 1,2,3\ , \ \varphi=\frac{2\pi}{3}\ .
\end{gather}
The nearest neighbor hopping is set as $1$. At $t = m = 0$, we recover
the Hofstadter model \cite{Hofstadter76, Hatsugai93, Zhao11, Huang12,
  Wang13, Harper14, Aidelsburger15} on a square lattice with magnetic
flux $\varphi$ per structural unit cell, and its magnetic unit cell
consists of $3$ structural unit cells along the $y$ direction.
$t \neq 0$ allows for second neighbor (i.e.~diagonal) hopping, and
$m\neq 0$ describes a flux-commensurate onsite sawtooth potential. See
SM for phase diagram.  At $k_y = 0$ and $\pi$, $H(\vec k)$ is
invariant under the combined transformation of time-reversal,
$H(\vec k) \rightarrow H^{*}(-\vec k)$, and inversion,
$H(\vec k) \rightarrow H(-\vec k)$, and is hence real. Eigenstates
there are subject to Thm.~\ref{thm-1d}.

Now consider quenches in which the initial state is prepared by
filling one of the three bands of a pre-quench Hamiltonian
parameterized by $t_i, m_i$, and evolved using a post-quench
Hamiltonian with $t_f, m_f$ \footnote{an initial state of two filled
  bands is equivalent to one with a single filled band through
  particle-hole transformation.}. In Fig.~\ref{fig:maxmin}, we keep
$t_i, m_i, m_f$ fixed, and plot the $\MaxMin$ (Eq.~\ref{maxmin1}) of
the three pre-quench bands as functions of the post-quench $t_f$. By
varying $t_f$, the post-quench $H(\vec k)$ is swept through six
different topological phases as labeled in Fig.~\ref{fig:maxmin}. 

Let us illustrate topological and symmetry-protected DQPTs with two
examples, using $\psi^{(2)}$ as the pre-quench state (blue circled
line in Fig.~\ref{fig:maxmin}):
\begin{inparaenum}[(i)]
\item \label{topo-dqpt}\emph{Topological DQPT} protected by 2D Chern
  number. Consider the quench from $\psi^{(2)}$ to phase 5. In this case,
  the Chern number of the pre-quench state ($C=-1$) differs from
  \emph{all} three Chern numbers of the post-quench Hamiltonian
  ($C=[1,-2,1]$), thus from Thm.~\ref{thm-2d}, all three overlaps have
  nodes, and Eq.~\ref{tri-ineq-k} is satisfied.
\item \label{sp-dqpt}\emph{Symmetry-protected DQPT}. Consider the
  quench from $\psi^{(2)}$ to phase 2. In this case, the pre-quench
  Chern number ($C=-1$) is identical to at least one of the
  post-quench Chern numbers ($C=[0,1,-1]$), hence not all overlaps
  have nodes originating from Thm.~\ref{thm-2d}. Nevertheless, at
  $k_y = 0$ and $\pi$ where the Hamiltonian is real, its eigenstates
  can be classified by their Berry phases. One can find numerically
  that at $k_y = 0$, the Berry phase for $\psi^{(2)}$ is $\gamma = 0$,
  whereas that of the post-quench $\phi^{(3)}$ (the one with $C=-1$)
  is $\gamma = \pi$. According to Thm.~\ref{thm-1d}, therefore,
  $\langle \phi^{(3)} | \psi^{(2)}\rangle_{k_y = 0}(k_x)$ has node
  along $k_x$. Nodes in overlaps of $\psi^{(2)}$ with $\phi^{(1)}$ and
  $\phi^{(2)}$ are still protected by Thm.~\ref{thm-2d}. Thus all
  three overlaps have nodes and DQPT is protected.
\end{inparaenum}

Details of all $18$ quench types ($3$ pre-quench states $\times$ $6$
post-quench phases) can be found in SM. We should note here that out
of all 18 types, 2 robust DQPTs ($\psi^{(1)}$ to phases 2 and 5)
exhibit an even number of overlap nodes at $k_y = 0$ and/or $\pi$ not
accounted for by Thms.~\ref{thm-2d} and \ref{thm-1d}. By tuning
$t_{i,f}$ and $m_{i,f}$, we were able to shift the nodes along $k_x$
as well as to change the total number of nodes by an even number, but
could not entirely eliminate them. We suspect however that they could
eventually be eliminated in an enlarged parameter space.


\begin{figure}
  \centering
  \includegraphics[width=.49\textwidth]{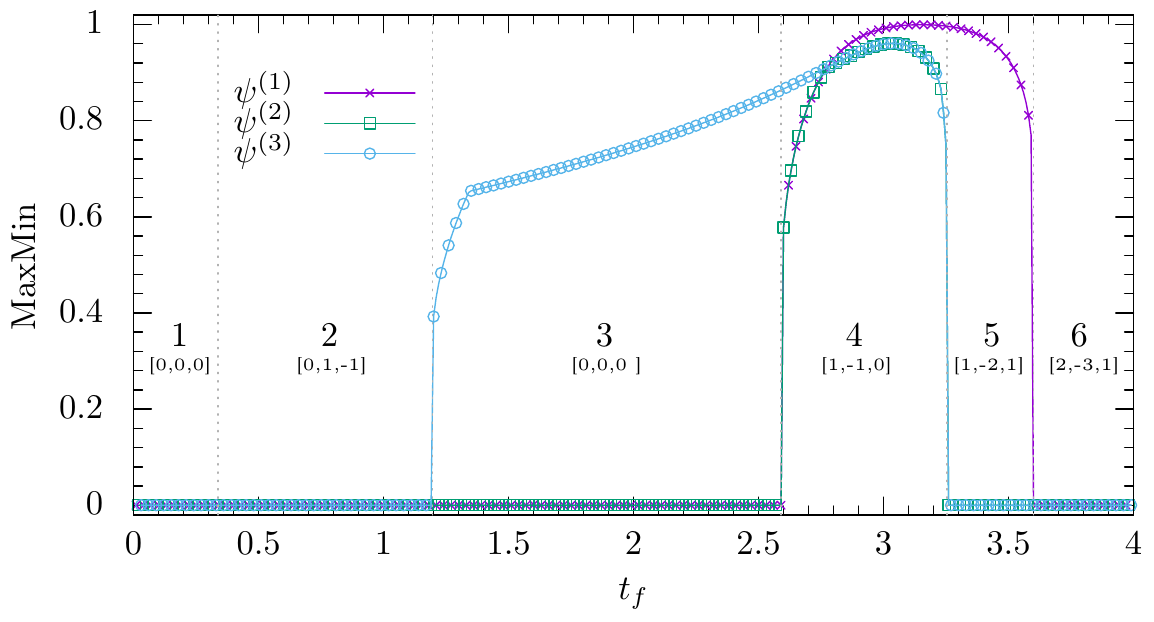}
  \caption{(Color online) Plot of $\psi_{\MaxMin}$ (Eq.~\ref{maxmin1})
    as functions of the post-quench $t$. Pre-quench state is prepared
    by filling one of the three bands $\psi^{(1,2,3)}$ of the
    generalized Hofstadter model Eq.~\ref{hof-hk} with parameters
    $t_i=3$ and $m_i=2.8$. Post-quench $H(\vec k)$ has fixed $m_f = 3$
    and a varying $t_f$, sweeping it through six topological phases
    labeled by its three Chern numbers (ordered from lower to higher
    band). The pre-quench Hamiltonian is in phase 4. A robust DQPT can
    be identified by $\psi_{\MaxMin} = 0$ (Eq.~\ref{maxmin2}). Note
    that $\psi_{\MaxMin}$ changes between zero and non-zero only at
    phase boundaries, verifying robust DQPT as a feature of
    topological phases insensitive to parameter tuning. See SM for
    detailed account of all $18$ types of quenches shown here.}
  \label{fig:maxmin}
\end{figure}

\paragraph{Conclusion and discussion}
In this work, we showed that for quantum quenches between gapped
phases in a generic multi-band system, a robust dynamical quantum
phase transition (DQPT) is a consequence of momentum-space nodes (or
zeros) in the wavefunction overlap between the pre-quench state and
all post-quench energy eigenstates. Nodes in wavefunction overlaps are
topologically protected if the topological indices of the two
participating wavefunctions---such as Chern number in 2D and Berry
phase in 1D---are different.

Our main tenets here are the triangle inequality Eq.~\ref{tri-ineq-k},
and phase ergodicity. It is interesting to note that collapsing a band
gap would affect both conditions: right at the gap collapsing point
$\varepsilon_{\vec k}^{(n)} = \varepsilon_{\vec k}^{(n+1)}$, the two
phases become mutually locked; as the gap re-opens, the system has
gone through a topological transition, which changes the node
structure in wavefunction overlaps. We also note that while the
\emph{existence} of topological and symmetry-protected DQPT is
insensitive to details of the energy band structure, the exact times
at which it would occur will inevitably depend on the latter. The
shortest critical time will be upper-bounded by the recurrence time of
the phases, which, for few-level systems such as band insulators,
should remain physically relevant \footnote{Under special
  circumstances, the DQPT critical time may also emerge at a much
  shorter time scale, e.g., the inverse level spacing, see
  Ref.~\cite{Zhang16}.}.

DQPT in band systems is in principle experimentally measurable. As
shown in Eq.~\ref{rho-kn}, DQPT can be identified as the depletion of
``sublattice'' particle density $\rho_{\vec k,a}(t^{*})$ where
sublattice $a$ can also refer to spin, orbital, etc. Particle density
$\rho_{\vec k}(t) = \sum_a\rho_{\vec k,a}(t)$ can already be measured
in cold atom systems by time-of-flight experiments
\cite{Bloch08,Bloch12,Langen15,Zhao11,Deng14}. It is not hard to
envisage an additional procedure of ``sublattice'' isolation in such
measurements, e.g., by using a magnetic field for spin filtering, or
by releasing other sublattices $b \neq a$ slightly earlier than $a$.

\begin{acknowledgments}
  \paragraph{Acknowledgments} We are grateful to D.P.~Arovas,
  A.~Alexandradinata, and A.~Saxena for useful discussions. We thank
  J.M.~Zhang for communication of a recent work \cite{Zhang16}. Work
  at LANL was supported by the US DOE BES E304/E3B7. Work at NORDITA
  was supported by ERC DM 321031.

  \paragraph{Note added} Ref.~\cite{Gu16} appeared shortly after the
  completion of this manuscript, where results regarding topological
  nodes in wavefunction overlaps, consistent with Theorems
  \ref{thm-2d} and \ref{thm-1d} presented here, are obtained elegantly
  by appealing to adiabatic continuity. In Ref.~\cite{Roy16},
  topological nodes have also been connected to non-analyticity in
  physical observables upon tuning the post-quench Hamiltonian across
  a topological transition. We thank K.~Sun and A.~Das for
  communications.

\end{acknowledgments}

\bibliography{dpt} \bibliographystyle{apsrev-no-url}

\begin{thebibliography}{65}
\expandafter\ifx\csname natexlab\endcsname\relax\def\natexlab#1{#1}\fi
\expandafter\ifx\csname bibnamefont\endcsname\relax
  \def\bibnamefont#1{#1}\fi
\expandafter\ifx\csname bibfnamefont\endcsname\relax
  \def\bibfnamefont#1{#1}\fi
\expandafter\ifx\csname citenamefont\endcsname\relax
  \def\citenamefont#1{#1}\fi
\expandafter\ifx\csname url\endcsname\relax
  \def\url#1{\texttt{#1}}\fi
\expandafter\ifx\csname urlprefix\endcsname\relax\def\urlprefix{URL }\fi
\providecommand{\bibinfo}[2]{#2}
\providecommand{\eprint}[2][]{\url{#2}}

\bibitem[{\citenamefont{Bloch et~al.}(2008)\citenamefont{Bloch, Dalibard, and
  Zwerger}}]{Bloch08}
\bibinfo{author}{\bibfnamefont{I.}~\bibnamefont{Bloch}},
  \bibinfo{author}{\bibfnamefont{J.}~\bibnamefont{Dalibard}}, \bibnamefont{and}
  \bibinfo{author}{\bibfnamefont{W.}~\bibnamefont{Zwerger}},
  \bibinfo{journal}{Rev. Mod. Phys.} \textbf{\bibinfo{volume}{80}},
  \bibinfo{pages}{885} (\bibinfo{year}{2008}).

\bibitem[{\citenamefont{Bloch et~al.}(2012)\citenamefont{Bloch, Dalibard, and
  Nascimb{\`e}ne}}]{Bloch12}
\bibinfo{author}{\bibfnamefont{I.}~\bibnamefont{Bloch}},
  \bibinfo{author}{\bibfnamefont{J.}~\bibnamefont{Dalibard}}, \bibnamefont{and}
  \bibinfo{author}{\bibfnamefont{S.}~\bibnamefont{Nascimb{\`e}ne}},
  \bibinfo{journal}{Nature Physics} \textbf{\bibinfo{volume}{8}},
  \bibinfo{pages}{267} (\bibinfo{year}{2012}).

\bibitem[{\citenamefont{{Langen} et~al.}(2015)\citenamefont{{Langen}, {Geiger},
  and {Schmiedmayer}}}]{Langen15}
\bibinfo{author}{\bibfnamefont{T.}~\bibnamefont{{Langen}}},
  \bibinfo{author}{\bibfnamefont{R.}~\bibnamefont{{Geiger}}}, \bibnamefont{and}
  \bibinfo{author}{\bibfnamefont{J.}~\bibnamefont{{Schmiedmayer}}},
  \bibinfo{journal}{Annual Review of Condensed Matter Physics}
  \textbf{\bibinfo{volume}{6}}, \bibinfo{pages}{201} (\bibinfo{year}{2015}),
  \eprint{1408.6377}.

\bibitem[{\citenamefont{Polkovnikov et~al.}(2011)\citenamefont{Polkovnikov,
  Sengupta, Silva, and Vengalattore}}]{Polkovnikov11}
\bibinfo{author}{\bibfnamefont{A.}~\bibnamefont{Polkovnikov}},
  \bibinfo{author}{\bibfnamefont{K.}~\bibnamefont{Sengupta}},
  \bibinfo{author}{\bibfnamefont{A.}~\bibnamefont{Silva}}, \bibnamefont{and}
  \bibinfo{author}{\bibfnamefont{M.}~\bibnamefont{Vengalattore}},
  \bibinfo{journal}{Rev. Mod. Phys.} \textbf{\bibinfo{volume}{83}},
  \bibinfo{pages}{863} (\bibinfo{year}{2011}).

\bibitem[{\citenamefont{Igl\'oi and Rieger}(2000)}]{Igloi00}
\bibinfo{author}{\bibfnamefont{F.}~\bibnamefont{Igl\'oi}} \bibnamefont{and}
  \bibinfo{author}{\bibfnamefont{H.}~\bibnamefont{Rieger}},
  \bibinfo{journal}{Phys. Rev. Lett.} \textbf{\bibinfo{volume}{85}},
  \bibinfo{pages}{3233} (\bibinfo{year}{2000}).

\bibitem[{\citenamefont{Altman and Auerbach}(2002)}]{Altman02}
\bibinfo{author}{\bibfnamefont{E.}~\bibnamefont{Altman}} \bibnamefont{and}
  \bibinfo{author}{\bibfnamefont{A.}~\bibnamefont{Auerbach}},
  \bibinfo{journal}{Phys. Rev. Lett.} \textbf{\bibinfo{volume}{89}},
  \bibinfo{pages}{250404} (\bibinfo{year}{2002}).

\bibitem[{\citenamefont{Polkovnikov et~al.}(2002)\citenamefont{Polkovnikov,
  Sachdev, and Girvin}}]{Polkovnikov02}
\bibinfo{author}{\bibfnamefont{A.}~\bibnamefont{Polkovnikov}},
  \bibinfo{author}{\bibfnamefont{S.}~\bibnamefont{Sachdev}}, \bibnamefont{and}
  \bibinfo{author}{\bibfnamefont{S.~M.} \bibnamefont{Girvin}},
  \bibinfo{journal}{Phys. Rev. A} \textbf{\bibinfo{volume}{66}},
  \bibinfo{pages}{053607} (\bibinfo{year}{2002}).

\bibitem[{\citenamefont{Sengupta et~al.}(2004)\citenamefont{Sengupta, Powell,
  and Sachdev}}]{Sengupta04}
\bibinfo{author}{\bibfnamefont{K.}~\bibnamefont{Sengupta}},
  \bibinfo{author}{\bibfnamefont{S.}~\bibnamefont{Powell}}, \bibnamefont{and}
  \bibinfo{author}{\bibfnamefont{S.}~\bibnamefont{Sachdev}},
  \bibinfo{journal}{Phys. Rev. A} \textbf{\bibinfo{volume}{69}},
  \bibinfo{pages}{053616} (\bibinfo{year}{2004}).

\bibitem[{\citenamefont{{Calabrese} and {Cardy}}(2006)}]{Calabrese06}
\bibinfo{author}{\bibfnamefont{P.}~\bibnamefont{{Calabrese}}} \bibnamefont{and}
  \bibinfo{author}{\bibfnamefont{J.}~\bibnamefont{{Cardy}}},
  \bibinfo{journal}{Physical Review Letters} \textbf{\bibinfo{volume}{96}},
  \bibinfo{eid}{136801} (\bibinfo{year}{2006}), \eprint{cond-mat/0601225}.

\bibitem[{\citenamefont{{Barankov} and {Polkovnikov}}(2008)}]{Barankov08}
\bibinfo{author}{\bibfnamefont{R.}~\bibnamefont{{Barankov}}} \bibnamefont{and}
  \bibinfo{author}{\bibfnamefont{A.}~\bibnamefont{{Polkovnikov}}},
  \bibinfo{journal}{Physical Review Letters} \textbf{\bibinfo{volume}{101}},
  \bibinfo{eid}{076801} (\bibinfo{year}{2008}), \eprint{0804.2894}.

\bibitem[{\citenamefont{Pollmann et~al.}(2010)\citenamefont{Pollmann, Mukerjee,
  Green, and Moore}}]{Pollmann10}
\bibinfo{author}{\bibfnamefont{F.}~\bibnamefont{Pollmann}},
  \bibinfo{author}{\bibfnamefont{S.}~\bibnamefont{Mukerjee}},
  \bibinfo{author}{\bibfnamefont{A.~G.} \bibnamefont{Green}}, \bibnamefont{and}
  \bibinfo{author}{\bibfnamefont{J.~E.} \bibnamefont{Moore}},
  \bibinfo{journal}{Phys. Rev. E} \textbf{\bibinfo{volume}{81}},
  \bibinfo{pages}{020101} (\bibinfo{year}{2010}).

\bibitem[{\citenamefont{{Nandkishore} and {Huse}}(2015)}]{Nandkishore15}
\bibinfo{author}{\bibfnamefont{R.}~\bibnamefont{{Nandkishore}}}
  \bibnamefont{and} \bibinfo{author}{\bibfnamefont{D.~A.}
  \bibnamefont{{Huse}}}, \bibinfo{journal}{Annual Review of Condensed Matter
  Physics} \textbf{\bibinfo{volume}{6}}, \bibinfo{pages}{15}
  (\bibinfo{year}{2015}), \eprint{1404.0686}.

\bibitem[{\citenamefont{{Eisert} et~al.}(2015)\citenamefont{{Eisert},
  {Friesdorf}, and {Gogolin}}}]{Eisert15}
\bibinfo{author}{\bibfnamefont{J.}~\bibnamefont{{Eisert}}},
  \bibinfo{author}{\bibfnamefont{M.}~\bibnamefont{{Friesdorf}}},
  \bibnamefont{and}
  \bibinfo{author}{\bibfnamefont{C.}~\bibnamefont{{Gogolin}}},
  \bibinfo{journal}{Nature Physics} \textbf{\bibinfo{volume}{11}},
  \bibinfo{pages}{124} (\bibinfo{year}{2015}), \eprint{1408.5148}.

\bibitem[{\citenamefont{{D'Alessio} et~al.}(2015)\citenamefont{{D'Alessio},
  {Kafri}, {Polkovnikov}, and {Rigol}}}]{DAlessio15}
\bibinfo{author}{\bibfnamefont{L.}~\bibnamefont{{D'Alessio}}},
  \bibinfo{author}{\bibfnamefont{Y.}~\bibnamefont{{Kafri}}},
  \bibinfo{author}{\bibfnamefont{A.}~\bibnamefont{{Polkovnikov}}},
  \bibnamefont{and} \bibinfo{author}{\bibfnamefont{M.}~\bibnamefont{{Rigol}}},
  \bibinfo{journal}{ArXiv e-prints}  (\bibinfo{year}{2015}),
  \eprint{1509.06411}.

\bibitem[{\citenamefont{{Essler} and {Fagotti}}(2016)}]{Essler16}
\bibinfo{author}{\bibfnamefont{F.~H.~L.} \bibnamefont{{Essler}}}
  \bibnamefont{and}
  \bibinfo{author}{\bibfnamefont{M.}~\bibnamefont{{Fagotti}}},
  \bibinfo{journal}{ArXiv e-prints}  (\bibinfo{year}{2016}),
  \eprint{1603.06452}.

\bibitem[{\citenamefont{{Millen} and {Xuereb}}(2016)}]{Millen16}
\bibinfo{author}{\bibfnamefont{J.}~\bibnamefont{{Millen}}} \bibnamefont{and}
  \bibinfo{author}{\bibfnamefont{A.}~\bibnamefont{{Xuereb}}},
  \bibinfo{journal}{New Journal of Physics} \textbf{\bibinfo{volume}{18}},
  \bibinfo{eid}{011002} (\bibinfo{year}{2016}), \eprint{1509.01086}.

\bibitem[{\citenamefont{{LeClair} et~al.}(1995)\citenamefont{{LeClair},
  {Mussardo}, {Saleur}, and {Skorik}}}]{LeClair95}
\bibinfo{author}{\bibfnamefont{A.}~\bibnamefont{{LeClair}}},
  \bibinfo{author}{\bibfnamefont{G.}~\bibnamefont{{Mussardo}}},
  \bibinfo{author}{\bibfnamefont{H.}~\bibnamefont{{Saleur}}}, \bibnamefont{and}
  \bibinfo{author}{\bibfnamefont{S.}~\bibnamefont{{Skorik}}},
  \bibinfo{journal}{Nuclear Physics B} \textbf{\bibinfo{volume}{453}},
  \bibinfo{pages}{581} (\bibinfo{year}{1995}), \eprint{hep-th/9503227}.

\bibitem[{\citenamefont{Heyl et~al.}(2013)\citenamefont{Heyl, Polkovnikov, and
  Kehrein}}]{Heyl13}
\bibinfo{author}{\bibfnamefont{M.}~\bibnamefont{Heyl}},
  \bibinfo{author}{\bibfnamefont{A.}~\bibnamefont{Polkovnikov}},
  \bibnamefont{and} \bibinfo{author}{\bibfnamefont{S.}~\bibnamefont{Kehrein}},
  \bibinfo{journal}{Phys. Rev. Lett.} \textbf{\bibinfo{volume}{110}},
  \bibinfo{pages}{135704} (\bibinfo{year}{2013}).

\bibitem[{\citenamefont{{Fagotti}}(2013)}]{Fagotti13}
\bibinfo{author}{\bibfnamefont{M.}~\bibnamefont{{Fagotti}}},
  \bibinfo{journal}{ArXiv e-prints}  (\bibinfo{year}{2013}),
  \eprint{1308.0277}.

\bibitem[{\citenamefont{{Fisher}}(1965)}]{Fisher65}
\bibinfo{author}{\bibfnamefont{M.~E.} \bibnamefont{{Fisher}}},
  \emph{\bibinfo{title}{The Nature of Critical Points}},
  vol.~\bibinfo{volume}{7} of \emph{\bibinfo{series}{Lectures in Theoretical
  Physics}} (\bibinfo{publisher}{the University of Colorado Press},
  \bibinfo{year}{1965}).

\bibitem[{\citenamefont{Yang and Lee}(1952)}]{Yang52}
\bibinfo{author}{\bibfnamefont{C.~N.} \bibnamefont{Yang}} \bibnamefont{and}
  \bibinfo{author}{\bibfnamefont{T.~D.} \bibnamefont{Lee}},
  \bibinfo{journal}{Phys. Rev.} \textbf{\bibinfo{volume}{87}},
  \bibinfo{pages}{404} (\bibinfo{year}{1952}).

\bibitem[{\citenamefont{{Lee} and {Yang}}(1952)}]{Lee52}
\bibinfo{author}{\bibfnamefont{T.~D.} \bibnamefont{{Lee}}} \bibnamefont{and}
  \bibinfo{author}{\bibfnamefont{C.~N.} \bibnamefont{{Yang}}},
  \bibinfo{journal}{Physical Review} \textbf{\bibinfo{volume}{87}},
  \bibinfo{pages}{410} (\bibinfo{year}{1952}).

\bibitem[{Note1()}]{Note1}
Note1, \bibinfo{note}{the formal similarity between $e^{-iHt}$ and $e^{-\beta
  H}$ is routinely used in field theoretic calculations such as correlators,
  where the thermal and quantum-dynamic results are related through Wick
  rotation. Knowledge in one domain can then be transcribed into the other
  without additional calculation. This ``equivalence'', however, relies on the
  existence of an analytic continuation between $it$ and $\beta $ in the
  complex temperature plane. The presence of closed and/or semi-infinite Fisher
  zero lines divides the complex temperature plane into disjoint regions, and
  the correspondence between $it$ and $\beta $ will break down if they reside
  in different regions. For quenches that exhibit DQPT, as pointed out in
  Ref.~\cite {Heyl13}, this implies that non-equilibrium time evolution (i.e.
  along the imaginary temperature axis) is in general not governed by the
  equilibrium thermodynamics (along the real temperature axis).}

\bibitem[{\citenamefont{{Pozsgay}}(2013)}]{Pozsgay13}
\bibinfo{author}{\bibfnamefont{B.}~\bibnamefont{{Pozsgay}}},
  \bibinfo{journal}{Journal of Statistical Mechanics: Theory and Experiment}
  \textbf{\bibinfo{volume}{10}}, \bibinfo{eid}{10028} (\bibinfo{year}{2013}),
  \eprint{1308.3087}.

\bibitem[{\citenamefont{Vajna and D\'ora}(2014)}]{Vajna14}
\bibinfo{author}{\bibfnamefont{S.}~\bibnamefont{Vajna}} \bibnamefont{and}
  \bibinfo{author}{\bibfnamefont{B.}~\bibnamefont{D\'ora}},
  \bibinfo{journal}{Phys. Rev. B} \textbf{\bibinfo{volume}{89}},
  \bibinfo{pages}{161105} (\bibinfo{year}{2014}).

\bibitem[{\citenamefont{Hickey et~al.}(2014)\citenamefont{Hickey, Genway, and
  Garrahan}}]{Hickey14}
\bibinfo{author}{\bibfnamefont{J.~M.} \bibnamefont{Hickey}},
  \bibinfo{author}{\bibfnamefont{S.}~\bibnamefont{Genway}}, \bibnamefont{and}
  \bibinfo{author}{\bibfnamefont{J.~P.} \bibnamefont{Garrahan}},
  \bibinfo{journal}{Phys. Rev. B} \textbf{\bibinfo{volume}{89}},
  \bibinfo{pages}{054301} (\bibinfo{year}{2014}).

\bibitem[{\citenamefont{Vajna and D\'ora}(2015)}]{Vajna15}
\bibinfo{author}{\bibfnamefont{S.}~\bibnamefont{Vajna}} \bibnamefont{and}
  \bibinfo{author}{\bibfnamefont{B.}~\bibnamefont{D\'ora}},
  \bibinfo{journal}{Phys. Rev. B} \textbf{\bibinfo{volume}{91}},
  \bibinfo{pages}{155127} (\bibinfo{year}{2015}).

\bibitem[{\citenamefont{Schmitt and Kehrein}(2015)}]{Schmitt15}
\bibinfo{author}{\bibfnamefont{M.}~\bibnamefont{Schmitt}} \bibnamefont{and}
  \bibinfo{author}{\bibfnamefont{S.}~\bibnamefont{Kehrein}},
  \bibinfo{journal}{Phys. Rev. B} \textbf{\bibinfo{volume}{92}},
  \bibinfo{pages}{075114} (\bibinfo{year}{2015}).

\bibitem[{\citenamefont{{Divakaran} et~al.}(2016)\citenamefont{{Divakaran},
  {Sharma}, and {Dutta}}}]{Divakaran16}
\bibinfo{author}{\bibfnamefont{U.}~\bibnamefont{{Divakaran}}},
  \bibinfo{author}{\bibfnamefont{S.}~\bibnamefont{{Sharma}}}, \bibnamefont{and}
  \bibinfo{author}{\bibfnamefont{A.}~\bibnamefont{{Dutta}}},
  \bibinfo{journal}{ArXiv e-prints}  (\bibinfo{year}{2016}),
  \eprint{1601.04851}.

\bibitem[{\citenamefont{{Sharma} et~al.}(2016)\citenamefont{{Sharma},
  {Divakaran}, {Polkovnikov}, and {Dutta}}}]{Sharma16}
\bibinfo{author}{\bibfnamefont{S.}~\bibnamefont{{Sharma}}},
  \bibinfo{author}{\bibfnamefont{U.}~\bibnamefont{{Divakaran}}},
  \bibinfo{author}{\bibfnamefont{A.}~\bibnamefont{{Polkovnikov}}},
  \bibnamefont{and} \bibinfo{author}{\bibfnamefont{A.}~\bibnamefont{{Dutta}}},
  \bibinfo{journal}{ArXiv e-prints}  (\bibinfo{year}{2016}),
  \eprint{1601.01637}.

\bibitem[{\citenamefont{Budich and Heyl}(2016)}]{Budich16}
\bibinfo{author}{\bibfnamefont{J.~C.} \bibnamefont{Budich}} \bibnamefont{and}
  \bibinfo{author}{\bibfnamefont{M.}~\bibnamefont{Heyl}},
  \bibinfo{journal}{Phys. Rev. B} \textbf{\bibinfo{volume}{93}},
  \bibinfo{pages}{085416} (\bibinfo{year}{2016}).

\bibitem[{\citenamefont{Karrasch and Schuricht}(2013)}]{Karrasch13}
\bibinfo{author}{\bibfnamefont{C.}~\bibnamefont{Karrasch}} \bibnamefont{and}
  \bibinfo{author}{\bibfnamefont{D.}~\bibnamefont{Schuricht}},
  \bibinfo{journal}{Phys. Rev. B} \textbf{\bibinfo{volume}{87}},
  \bibinfo{pages}{195104} (\bibinfo{year}{2013}).

\bibitem[{\citenamefont{Kriel et~al.}(2014)\citenamefont{Kriel, Karrasch, and
  Kehrein}}]{Kriel14}
\bibinfo{author}{\bibfnamefont{J.~N.} \bibnamefont{Kriel}},
  \bibinfo{author}{\bibfnamefont{C.}~\bibnamefont{Karrasch}}, \bibnamefont{and}
  \bibinfo{author}{\bibfnamefont{S.}~\bibnamefont{Kehrein}},
  \bibinfo{journal}{Phys. Rev. B} \textbf{\bibinfo{volume}{90}},
  \bibinfo{pages}{125106} (\bibinfo{year}{2014}).

\bibitem[{\citenamefont{Andraschko and Sirker}(2014)}]{Andraschko14}
\bibinfo{author}{\bibfnamefont{F.}~\bibnamefont{Andraschko}} \bibnamefont{and}
  \bibinfo{author}{\bibfnamefont{J.}~\bibnamefont{Sirker}},
  \bibinfo{journal}{Phys. Rev. B} \textbf{\bibinfo{volume}{89}},
  \bibinfo{pages}{125120} (\bibinfo{year}{2014}).

\bibitem[{\citenamefont{Sharma et~al.}(2015)\citenamefont{Sharma, Suzuki, and
  Dutta}}]{Sharma15}
\bibinfo{author}{\bibfnamefont{S.}~\bibnamefont{Sharma}},
  \bibinfo{author}{\bibfnamefont{S.}~\bibnamefont{Suzuki}}, \bibnamefont{and}
  \bibinfo{author}{\bibfnamefont{A.}~\bibnamefont{Dutta}},
  \bibinfo{journal}{Phys. Rev. B} \textbf{\bibinfo{volume}{92}},
  \bibinfo{pages}{104306} (\bibinfo{year}{2015}).

\bibitem[{\citenamefont{Canovi et~al.}(2014)\citenamefont{Canovi, Werner, and
  Eckstein}}]{Canovi14}
\bibinfo{author}{\bibfnamefont{E.}~\bibnamefont{Canovi}},
  \bibinfo{author}{\bibfnamefont{P.}~\bibnamefont{Werner}}, \bibnamefont{and}
  \bibinfo{author}{\bibfnamefont{M.}~\bibnamefont{Eckstein}},
  \bibinfo{journal}{Phys. Rev. Lett.} \textbf{\bibinfo{volume}{113}},
  \bibinfo{pages}{265702} (\bibinfo{year}{2014}).

\bibitem[{\citenamefont{Heyl}(2015)}]{Heyl15}
\bibinfo{author}{\bibfnamefont{M.}~\bibnamefont{Heyl}}, \bibinfo{journal}{Phys.
  Rev. Lett.} \textbf{\bibinfo{volume}{115}}, \bibinfo{pages}{140602}
  (\bibinfo{year}{2015}).

\bibitem[{\citenamefont{{Talkner} et~al.}(2007)\citenamefont{{Talkner}, {Lutz},
  and {H{\"a}nggi}}}]{Talkner07}
\bibinfo{author}{\bibfnamefont{P.}~\bibnamefont{{Talkner}}},
  \bibinfo{author}{\bibfnamefont{E.}~\bibnamefont{{Lutz}}}, \bibnamefont{and}
  \bibinfo{author}{\bibfnamefont{P.}~\bibnamefont{{H{\"a}nggi}}},
  \bibinfo{journal}{\pre} \textbf{\bibinfo{volume}{75}}, \bibinfo{eid}{050102}
  (\bibinfo{year}{2007}), \eprint{cond-mat/0703189}.

\bibitem[{\citenamefont{{Campisi} et~al.}(2011)\citenamefont{{Campisi},
  {H{\"a}nggi}, and {Talkner}}}]{Campisi11}
\bibinfo{author}{\bibfnamefont{M.}~\bibnamefont{{Campisi}}},
  \bibinfo{author}{\bibfnamefont{P.}~\bibnamefont{{H{\"a}nggi}}},
  \bibnamefont{and}
  \bibinfo{author}{\bibfnamefont{P.}~\bibnamefont{{Talkner}}},
  \bibinfo{journal}{Reviews of Modern Physics} \textbf{\bibinfo{volume}{83}},
  \bibinfo{pages}{771} (\bibinfo{year}{2011}), \eprint{1012.2268}.

\bibitem[{\citenamefont{Jarzynski}(1997)}]{Jarzynski97}
\bibinfo{author}{\bibfnamefont{C.}~\bibnamefont{Jarzynski}},
  \bibinfo{journal}{Phys. Rev. Lett.} \textbf{\bibinfo{volume}{78}},
  \bibinfo{pages}{2690} (\bibinfo{year}{1997}).

\bibitem[{\citenamefont{Deffner and Saxena}(2015)}]{Deffner15}
\bibinfo{author}{\bibfnamefont{S.}~\bibnamefont{Deffner}} \bibnamefont{and}
  \bibinfo{author}{\bibfnamefont{A.}~\bibnamefont{Saxena}},
  \bibinfo{journal}{Phys. Rev. E} \textbf{\bibinfo{volume}{92}},
  \bibinfo{pages}{032137} (\bibinfo{year}{2015}).

\bibitem[{\citenamefont{Heyl}(2014)}]{Heyl14}
\bibinfo{author}{\bibfnamefont{M.}~\bibnamefont{Heyl}}, \bibinfo{journal}{Phys.
  Rev. Lett.} \textbf{\bibinfo{volume}{113}}, \bibinfo{pages}{205701}
  (\bibinfo{year}{2014}).

\bibitem[{Note2()}]{Note2}
Note2, \bibinfo{note}{e.g., in 2D integer quantum Hall systems, DQPT is related
  to the change in the \protect \emph {absolute value} of Chern number $|C|$
  instead of $C$ \cite {Vajna15}}.

\bibitem[{\citenamefont{{Kitaev}}(2006)}]{Kitaev06}
\bibinfo{author}{\bibfnamefont{A.}~\bibnamefont{{Kitaev}}},
  \bibinfo{journal}{Annals of Physics} \textbf{\bibinfo{volume}{321}},
  \bibinfo{pages}{2} (\bibinfo{year}{2006}), \eprint{cond-mat/0506438}.

\bibitem[{\citenamefont{Cornfeld et~al.}(1982)\citenamefont{Cornfeld,
  Sossinskii, Fomin, and Sinai}}]{Cornfeld12}
\bibinfo{author}{\bibfnamefont{I.}~\bibnamefont{Cornfeld}},
  \bibinfo{author}{\bibfnamefont{A.}~\bibnamefont{Sossinskii}},
  \bibinfo{author}{\bibfnamefont{S.}~\bibnamefont{Fomin}}, \bibnamefont{and}
  \bibinfo{author}{\bibfnamefont{Y.}~\bibnamefont{Sinai}},
  \emph{\bibinfo{title}{Ergodic Theory}} (\bibinfo{publisher}{Springer},
  \bibinfo{year}{1982}).

\bibitem[{\citenamefont{Kohmoto}(1985)}]{Kohmoto85}
\bibinfo{author}{\bibfnamefont{M.}~\bibnamefont{Kohmoto}},
  \bibinfo{journal}{Annals of Physics} \textbf{\bibinfo{volume}{160}},
  \bibinfo{pages}{343} (\bibinfo{year}{1985}).

\bibitem[{Note3()}]{Note3}
Note3, \bibinfo{note}{for two band models, ${\protect \boldsymbol {k}}_c$ is
  where the so-called ``population inversion'' occurs.}

\bibitem[{\citenamefont{Landau and Lifshits}(1977)}]{LandauQM}
\bibinfo{author}{\bibfnamefont{L.~D.} \bibnamefont{Landau}} \bibnamefont{and}
  \bibinfo{author}{\bibfnamefont{E.~M.} \bibnamefont{Lifshits}},
  \emph{\bibinfo{title}{Quantum Mechanics: Non-relativistic Theory}}
  (\bibinfo{publisher}{Butterworth-Heinemann}, \bibinfo{year}{1977}).

\bibitem[{\citenamefont{Arovas et~al.}(1988)\citenamefont{Arovas, Bhatt,
  Haldane, Littlewood, and Rammal}}]{Arovas88}
\bibinfo{author}{\bibfnamefont{D.~P.} \bibnamefont{Arovas}},
  \bibinfo{author}{\bibfnamefont{R.~N.} \bibnamefont{Bhatt}},
  \bibinfo{author}{\bibfnamefont{F.~D.~M.} \bibnamefont{Haldane}},
  \bibinfo{author}{\bibfnamefont{P.~B.} \bibnamefont{Littlewood}},
  \bibnamefont{and} \bibinfo{author}{\bibfnamefont{R.}~\bibnamefont{Rammal}},
  \bibinfo{journal}{Phys. Rev. Lett.} \textbf{\bibinfo{volume}{60}},
  \bibinfo{pages}{619} (\bibinfo{year}{1988}).

\bibitem[{\citenamefont{Hatsugai}(1993)}]{Hatsugai93}
\bibinfo{author}{\bibfnamefont{Y.}~\bibnamefont{Hatsugai}},
  \bibinfo{journal}{Phys. Rev. Lett.} \textbf{\bibinfo{volume}{71}},
  \bibinfo{pages}{3697} (\bibinfo{year}{1993}).

\bibitem[{\citenamefont{{Fang} et~al.}(2015)\citenamefont{{Fang}, {Chen},
  {Kee}, and {Fu}}}]{Fang15}
\bibinfo{author}{\bibfnamefont{C.}~\bibnamefont{{Fang}}},
  \bibinfo{author}{\bibfnamefont{Y.}~\bibnamefont{{Chen}}},
  \bibinfo{author}{\bibfnamefont{H.-Y.} \bibnamefont{{Kee}}}, \bibnamefont{and}
  \bibinfo{author}{\bibfnamefont{L.}~\bibnamefont{{Fu}}},
  \bibinfo{journal}{\prb} \textbf{\bibinfo{volume}{92}}, \bibinfo{eid}{081201}
  (\bibinfo{year}{2015}), \eprint{1506.03449}.

\bibitem[{\citenamefont{{Ryu} et~al.}(2010)\citenamefont{{Ryu}, {Schnyder},
  {Furusaki}, and {Ludwig}}}]{Ryu10}
\bibinfo{author}{\bibfnamefont{S.}~\bibnamefont{{Ryu}}},
  \bibinfo{author}{\bibfnamefont{A.~P.} \bibnamefont{{Schnyder}}},
  \bibinfo{author}{\bibfnamefont{A.}~\bibnamefont{{Furusaki}}},
  \bibnamefont{and} \bibinfo{author}{\bibfnamefont{A.~W.~W.}
  \bibnamefont{{Ludwig}}}, \bibinfo{journal}{New Journal of Physics}
  \textbf{\bibinfo{volume}{12}}, \bibinfo{eid}{065010} (\bibinfo{year}{2010}),
  \eprint{0912.2157}.

\bibitem[{\citenamefont{Alexandradinata
  et~al.}(2014)\citenamefont{Alexandradinata, Fang, Gilbert, and
  Bernevig}}]{Alexandradinata14}
\bibinfo{author}{\bibfnamefont{A.}~\bibnamefont{Alexandradinata}},
  \bibinfo{author}{\bibfnamefont{C.}~\bibnamefont{Fang}},
  \bibinfo{author}{\bibfnamefont{M.~J.} \bibnamefont{Gilbert}},
  \bibnamefont{and} \bibinfo{author}{\bibfnamefont{B.~A.}
  \bibnamefont{Bernevig}}, \bibinfo{journal}{Phys. Rev. Lett.}
  \textbf{\bibinfo{volume}{113}}, \bibinfo{pages}{116403}
  (\bibinfo{year}{2014}).

\bibitem[{\citenamefont{Hofstadter}(1976)}]{Hofstadter76}
\bibinfo{author}{\bibfnamefont{D.~R.} \bibnamefont{Hofstadter}},
  \bibinfo{journal}{Phys. Rev. B} \textbf{\bibinfo{volume}{14}},
  \bibinfo{pages}{2239} (\bibinfo{year}{1976}).

\bibitem[{\citenamefont{Zhao et~al.}(2011)\citenamefont{Zhao, Bray-Ali,
  Williams, Spielman, and Satija}}]{Zhao11}
\bibinfo{author}{\bibfnamefont{E.}~\bibnamefont{Zhao}},
  \bibinfo{author}{\bibfnamefont{N.}~\bibnamefont{Bray-Ali}},
  \bibinfo{author}{\bibfnamefont{C.~J.} \bibnamefont{Williams}},
  \bibinfo{author}{\bibfnamefont{I.~B.} \bibnamefont{Spielman}},
  \bibnamefont{and} \bibinfo{author}{\bibfnamefont{I.~I.}
  \bibnamefont{Satija}}, \bibinfo{journal}{Phys. Rev. A}
  \textbf{\bibinfo{volume}{84}}, \bibinfo{pages}{063629}
  (\bibinfo{year}{2011}).

\bibitem[{\citenamefont{{Huang} and {Arovas}}(2012)}]{Huang12}
\bibinfo{author}{\bibfnamefont{Z.}~\bibnamefont{{Huang}}} \bibnamefont{and}
  \bibinfo{author}{\bibfnamefont{D.~P.} \bibnamefont{{Arovas}}},
  \bibinfo{journal}{\prb} \textbf{\bibinfo{volume}{86}}, \bibinfo{eid}{245109}
  (\bibinfo{year}{2012}), \eprint{1201.0733}.

\bibitem[{\citenamefont{{Wang} et~al.}(2013)\citenamefont{{Wang}, {Soluyanov},
  and {Troyer}}}]{Wang13}
\bibinfo{author}{\bibfnamefont{L.}~\bibnamefont{{Wang}}},
  \bibinfo{author}{\bibfnamefont{A.~A.} \bibnamefont{{Soluyanov}}},
  \bibnamefont{and} \bibinfo{author}{\bibfnamefont{M.}~\bibnamefont{{Troyer}}},
  \bibinfo{journal}{Physical Review Letters} \textbf{\bibinfo{volume}{110}},
  \bibinfo{eid}{166802} (\bibinfo{year}{2013}), \eprint{1303.1061}.

\bibitem[{\citenamefont{{Harper} et~al.}(2014)\citenamefont{{Harper}, {Simon},
  and {Roy}}}]{Harper14}
\bibinfo{author}{\bibfnamefont{F.}~\bibnamefont{{Harper}}},
  \bibinfo{author}{\bibfnamefont{S.~H.} \bibnamefont{{Simon}}},
  \bibnamefont{and} \bibinfo{author}{\bibfnamefont{R.}~\bibnamefont{{Roy}}},
  \bibinfo{journal}{\prb} \textbf{\bibinfo{volume}{90}}, \bibinfo{eid}{075104}
  (\bibinfo{year}{2014}), \eprint{1404.5303}.

\bibitem[{\citenamefont{{Aidelsburger}
  et~al.}(2015)\citenamefont{{Aidelsburger}, {Lohse}, {Schweizer}, {Atala},
  {Barreiro}, {Nascimb{\`e}ne}, {Cooper}, {Bloch}, and
  {Goldman}}}]{Aidelsburger15}
\bibinfo{author}{\bibfnamefont{M.}~\bibnamefont{{Aidelsburger}}},
  \bibinfo{author}{\bibfnamefont{M.}~\bibnamefont{{Lohse}}},
  \bibinfo{author}{\bibfnamefont{C.}~\bibnamefont{{Schweizer}}},
  \bibinfo{author}{\bibfnamefont{M.}~\bibnamefont{{Atala}}},
  \bibinfo{author}{\bibfnamefont{J.~T.} \bibnamefont{{Barreiro}}},
  \bibinfo{author}{\bibfnamefont{S.}~\bibnamefont{{Nascimb{\`e}ne}}},
  \bibinfo{author}{\bibfnamefont{N.~R.} \bibnamefont{{Cooper}}},
  \bibinfo{author}{\bibfnamefont{I.}~\bibnamefont{{Bloch}}}, \bibnamefont{and}
  \bibinfo{author}{\bibfnamefont{N.}~\bibnamefont{{Goldman}}},
  \bibinfo{journal}{Nature Physics} \textbf{\bibinfo{volume}{11}},
  \bibinfo{pages}{162} (\bibinfo{year}{2015}), \eprint{1407.4205}.

\bibitem[{Note4()}]{Note4}
Note4, \bibinfo{note}{an initial state of two filled bands is equivalent to one
  with a single filled band through particle-hole transformation.}

\bibitem[{Note5()}]{Note5}
Note5, \bibinfo{note}{under special circumstances, the DQPT critical time may
  also emerge at a much shorter time scale, e.g., the inverse level spacing,
  see Ref.~\cite {Zhang16}.}

\bibitem[{\citenamefont{Deng et~al.}(2014)\citenamefont{Deng, Wang, and
  Duan}}]{Deng14}
\bibinfo{author}{\bibfnamefont{D.-L.} \bibnamefont{Deng}},
  \bibinfo{author}{\bibfnamefont{S.-T.} \bibnamefont{Wang}}, \bibnamefont{and}
  \bibinfo{author}{\bibfnamefont{L.-M.} \bibnamefont{Duan}},
  \bibinfo{journal}{Phys. Rev. A} \textbf{\bibinfo{volume}{90}},
  \bibinfo{pages}{041601} (\bibinfo{year}{2014}).

\bibitem[{\citenamefont{{Zhang} and {Yang}}(2016)}]{Zhang16}
\bibinfo{author}{\bibfnamefont{J.~M.} \bibnamefont{{Zhang}}} \bibnamefont{and}
  \bibinfo{author}{\bibfnamefont{H.-T.} \bibnamefont{{Yang}}},
  \bibinfo{journal}{ArXiv e-prints}  (\bibinfo{year}{2016}),
  \eprint{1601.03569}.

\bibitem[{\citenamefont{{Gu} and {Sun}}(2016)}]{Gu16}
\bibinfo{author}{\bibfnamefont{J.}~\bibnamefont{{Gu}}} \bibnamefont{and}
  \bibinfo{author}{\bibfnamefont{K.}~\bibnamefont{{Sun}}},
  \bibinfo{journal}{ArXiv e-prints}  (\bibinfo{year}{2016}),
  \eprint{1605.07627}.

\bibitem[{\citenamefont{{Roy} et~al.}(2016)\citenamefont{{Roy}, {Moessner}, and
  {Das}}}]{Roy16}
\bibinfo{author}{\bibfnamefont{S.}~\bibnamefont{{Roy}}},
  \bibinfo{author}{\bibfnamefont{R.}~\bibnamefont{{Moessner}}},
  \bibnamefont{and} \bibinfo{author}{\bibfnamefont{A.}~\bibnamefont{{Das}}},
  \bibinfo{journal}{ArXiv e-prints}  (\bibinfo{year}{2016}),
  \eprint{1606.06673}.

\end{thebibliography}

\onecolumngrid
\section{Supplemental Materials:}
\setcounter{equation}{0}
\setcounter{theorem}{0}
\setcounter{figure}{0}
\section{Relation between nodes in wavefunction overlaps and
  topological indices}
In this section, we prove two theorems relating momentum-space nodes
in wavefunction overlaps to the topological indices of the two
participating wavefunctions.  \newtheorem{proof}{Proof}
\begin{theorem}
  \label{sm:thm-2d}
  In 2D, the overlap of two Bloch bands $|\psi(\vec k)\rangle$ and
  $|\phi(\vec k)\rangle$, with Chern numbers $C_{\psi}$ and $C_{\phi}$
  respectively, must have at least $|C_{\psi} - C_{\phi}|$ nodes in
  the Brillouin zone. If it has no node, $C_{\psi} = C_{\phi}$.
\end{theorem}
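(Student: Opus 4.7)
The proof plan is to reinterpret the overlap $f(\vec k):=\langle\phi(\vec k)|\psi(\vec k)\rangle$ as a smooth section of a complex line bundle over the Brillouin-zone torus $T^{2}$ whose first Chern number equals $C_{\psi}-C_{\phi}$, and then count its zeros via Poincar\'e--Hopf. First I would set up the bundle picture: each non-degenerate Bloch band defines a Hermitian complex line bundle over $T^{2}$, namely $L_{\psi}$ with $c_{1}(L_{\psi})=C_{\psi}$ and $L_{\phi}$ with $c_{1}(L_{\phi})=C_{\phi}$. A local smooth choice of $|\psi(\vec k)\rangle$ is a local trivializing section of $L_{\psi}$; $U(1)$ gauge changes are the transition functions, and in general these bundles admit a global smooth gauge only when the Chern number vanishes.

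Next I would observe that $f$ transforms as a section of $\mathcal L := L_{\psi}\otimes L_{\phi}^{*}$. Under independent gauge changes $|\psi\rangle\to e^{i\alpha(\vec k)}|\psi\rangle$ and $|\phi\rangle\to e^{i\beta(\vec k)}|\phi\rangle$, the overlap transforms as $f\to e^{i(\alpha-\beta)}f$, exactly the transformation law of a section of $L_{\psi}\otimes L_{\phi}^{*}$. By additivity of the first Chern class under tensor product, together with the sign flip under dualization,
\[
  c_{1}(\mathcal L)=c_{1}(L_{\psi})-c_{1}(L_{\phi})=C_{\psi}-C_{\phi}.
\]
So $f$ is a globally defined smooth section of a line bundle over the closed oriented 2-manifold $T^{2}$ whose Chern number is explicitly known.

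Finally I would invoke the standard Poincar\'e--Hopf / Chern--Gauss--Bonnet statement for sections of complex line bundles: the signed sum of local winding numbers at the isolated zeros of a smooth section equals the first Chern number. Each isolated node contributes a nonzero integer index, and a zero of multiplicity $m$ contributes $\pm m$ and counts as $m$ coalesced nodes, so the total number of nodes counted with multiplicity is at least $|C_{\psi}-C_{\phi}|$. In particular, a nowhere-vanishing $f$ forces the signed sum to vanish, i.e.~$C_{\psi}=C_{\phi}$, delivering the contrapositive "no node $\Rightarrow C_{\psi}=C_{\phi}$" assertion.

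The main obstacle is not technical but expository: bridging this abstract Chern-class argument with a presentation accessible to a physics audience. A concrete backup is the two-patch proof: cover $T^{2}$ with two open sets $U_{1},U_{2}$ on which $|\psi\rangle$ and $|\phi\rangle$ admit smooth local gauges, express $C_{\psi}$ and $C_{\phi}$ as winding numbers of the respective transition phases along $\partial U_{1}=-\partial U_{2}$, and observe that if $f\neq 0$ everywhere then $\arg f_{j}$ is smooth on each patch, so the transition phases of $L_{\psi}$ and $L_{\phi}$ differ by the smooth coboundary $\arg f_{2}-\arg f_{1}$, whose winding around $\partial U_{1}$ is zero; this forces $C_{\psi}=C_{\phi}$. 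The sharper lower bound on the number of zeros then follows from applying the argument principle to $f$ on the boundary of a small neighborhood of each node, and collecting the local winding numbers.
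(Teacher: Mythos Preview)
Your argument is correct and is, at the mathematical level, the same argument as the paper's, just packaged differently. The paper does by hand what you invoke as Poincar\'e--Hopf: it picks small disks $R_i$ around the zeros of $\phi_1$ (a particular local trivialization of $L_\phi$), fixes the gauge $\phi_1>0$ on the complement $R_0$, and identifies $C_\phi$ with the summed winding of the transition phases on $\partial R_i$. When $\langle\phi|\psi\rangle$ is nowhere zero, imposing $\langle\phi|\psi\rangle>0$ slaves the gauge of $\psi$ to that of $\phi$, so $\psi$ inherits the same transition phases and hence the same Chern number. When the overlap has zeros, extra disks are added around them, and the discrepancy $C_\psi-C_\phi$ is precisely the sum of the vorticities on those new boundaries; bounding each vorticity by $1$ in absolute value gives $N_{\langle\phi|\psi\rangle}\ge|C_\psi-C_\phi|$.

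So your ``concrete backup'' is essentially the paper's proof, except that the paper chooses its patches adapted to the zeros of $\phi_1$ and of the overlap rather than a generic two-set cover. What your bundle-theoretic framing buys is conceptual clarity and immediate generalization (e.g.\ to higher-rank bundles or other base manifolds); what the paper's presentation buys is self-containedness for a physics readership not fluent in $c_1$, sections, and Poincar\'e--Hopf. Both treatments share the same mild sloppiness about nongeneric zeros: you resolve a zero of index $m$ into $|m|$ coalesced simple nodes, and the paper explicitly adopts the same viewpoint (``vortices of vorticity $|v|>1$ can be thought of as consisting of overlapping vortices of vorticity $\pm 1$''), so the node count is implicitly with multiplicity in both.
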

\begin{proof}
  \normalfont Let us first prove the second part of the theorem, where
  $\langle \phi | \psi\rangle$ has no node. Here
  $|\psi(\vec k)\rangle = (\psi_1(\vec k), \psi_2(\vec k), \cdots)^T$
  and
  $|\phi(\vec k)\rangle = (\phi_1(\vec k), \phi_2(\vec k), \cdots)^T$
  are column vectors. Assume $\phi_1$, the first component of $\phi$,
  has $N_{\phi}$ zeros in the Brillouin zone (BZ). Following
  Refs.~\onlinecite{Kohmoto85} and \onlinecite{Hatsugai93}, we
  partition the BZ into $N_{\phi}+1$ regions
  $R_0, R_1, \cdots, R_{N_{\phi}}$, where $R_{i> 0}$ is an
  infinitesimal neighborhood of the $i^{th}$ zero of $\phi_1$, and
  $R_0 = BZ\setminus (\cup_{i=1}^N R_i)$, note that $R_0$ should be
  understood as containing the boundaries $\partial R_{i\neq 0}$. The
  gauge of $|\phi\rangle$ is such that $\phi_1(\vec k) > 0$ for
  $\vec k \in R_0$, denoted as $|\phi\rangle_{R_0}$, and
  $\phi_2(\vec k)>0$ for $\vec k \in R_{i> 0}$ (or any other smooth
  gauge), denoted as $|\phi\rangle_{R_i}$. At the boundaries between
  $R_0$ and $R_{i> 0}$, one has the constituent condition
  \begin{gather}
    |\phi(\vec k)\rangle_{R_0} = e^{i\gamma_{\vec k}^i}|\phi(\vec
    k)\rangle_{R_i}\quad , \quad \vec k \in  \partial R_i\ ,
  \end{gather}
  where $\gamma_{\vec k}^i$ is the gauge mismatch of $\phi$ on the
  boundary of $R_i$. Since
  $\langle \phi(\vec k) | \psi(\vec k)\rangle \neq  0 \forall \vec k$,
  we can choose the gauge of $\psi$ according to that of $\phi$, such
  that their overlap is always real and positive,
  \begin{gather}
    {}_{R_i}\langle \phi(\vec k) | \psi(\vec k)\rangle_{R_i} > 0\quad
    , \quad \vec k \in R_i\ , \ i = 0, 1, 2, \cdots, N_{\phi}\ .
  \end{gather}
  This endows $\psi$ with the same constituent conditions at
  boundaries of $R_{i> 0}$,
  \begin{gather}
    \label{sm:psi-stitch}
    |\psi(\vec k)\rangle_{R_0} = e^{i\gamma_{\vec k}^i} |\psi(\vec
    k)\rangle_{R_i} \quad , \quad \vec k \in  \partial R_i\ .
  \end{gather}
  The Chern number of $\phi$ is given by the total vorticities of the
  stitching phases $\gamma^i$ \cite{Kohmoto85, Hatsugai93},
  \begin{gather}
    \label{sm:vi}
    C_{\phi} = \sum_{i = 1}^{N_{\phi}} v_i\quad , \quad
    v_i\equiv\oint\limits_{\partial R_i} \frac{d\vec k}{2\pi} \cdot
    \nabla_{\vec k} \gamma_{\vec k}^i\ .
  \end{gather}
  Since $\psi$ and $\phi$ share the same set of $\gamma_{\vec k}^i$,
  the two Chern numbers must be identical,
  \begin{gather}
    C_{\psi} = C_{\phi}\ .
  \end{gather}

  The proof of the first part is very similar. If
  $\langle \phi | \psi\rangle$ has $N_{\langle \phi|\psi\rangle}$
  zeros, we instead partition the BZ into
  $N_{\phi} + N_{\langle\phi|\psi\rangle} + 1$ regions, where
  $R_1, \cdots, R_{N_{\phi}}$ are defined the same as before, and one
  has in addition the regions $R_{N_{\phi} + j}$,
  $j = 1, 2, \cdots, N_{\langle \phi | \psi\rangle}$ surrounding nodes
  of $\langle \phi | \psi\rangle$.
  $R_0 = BZ \setminus
  (\cup_{i=1}^{N_{\phi}+N_{\langle\phi|\psi\rangle}} R_i)$.  The gauge
  choice of $\phi$ remains the same as before. For $\psi$, one can
  still choose its gauge as $\langle \phi | \psi\rangle > 0$ in
  $R_0, R_1, \cdots, R_{N_{\phi}}$. For the additional regions
  $R_{N_{\phi} + n}$, $n=1,2,\cdots, N_{\langle \phi | \psi\rangle}$,
  one can choose any other smooth gauge, say, the component
  $\psi_1 > 0$, which induce $N_{\langle \phi | \psi\rangle}$ new
  stitching phases. The Chern number of $\psi$ is the total vorticity
  of the $N_{\phi} + N_{\langle \phi|\psi\rangle}$ phases,
  \begin{gather}
    C_{\psi} = \sum_{i=1}^{N_{\phi} + N_{\langle \phi | \psi\rangle}}
    v_i = C_{\phi} + \sum_{i=N_{\phi}+1}^{N_{\phi} + N_{\langle \phi |
        \psi\rangle}} v_i\ ,
  \end{gather}
  where Eq.~\ref{sm:vi} is used to get the second equality. This implies
  that
  \begin{gather}
    \label{sm:c-diff}
    \left| \sum_{i=N_{\phi}+1}^{N_{\phi} + N_{\langle \phi |
          \psi\rangle}} v_i\right| = \left| C_{\psi} - C_{\phi} \right|\ .
  \end{gather}
  Note that the LHS of Eq.~\ref{sm:c-diff} satisfies
  $|\sum v_i | \le \sum |v_i| \le N_{\langle \phi | \psi\rangle}$,
  where we used $|v_i| = 0, \pm 1$, taking the viewpoint that vortices
  of vorticity $|v| > 1$ can be thought of as consisting of
  overlapping vortices of vorticity $\pm 1$. Thus
  \begin{gather}
    N_{\langle \phi | \psi\rangle} \ge |C_{\psi} - C_{\phi}|\ .
  \end{gather}
\end{proof}
\begin{theorem}
  \label{sm:thm-1d} 
  In 1D, the Berry phase $\gamma$ of a real Bloch band,
  $|\psi(k)\rangle = (\psi_1(k), \psi_2(k), \cdots)^t, \psi_a(k) \in
  \mathbb{R} \,\forall a$, is quantized to $0$ or $\pi$. The overlap
  of two real bands $|\psi(k)\rangle$ and $|\phi(k)\rangle$, with
  Berry phases $\gamma_{\psi}$ and $\gamma_{\phi}$ respectively, must
  have at least one node if $\gamma_{\psi} \neq \gamma_{\phi}$.
\end{theorem}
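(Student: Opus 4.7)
The plan is to split the argument into the two assertions of the theorem and handle them separately.

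For the quantization to $\{0,\pi\}$, I would work in a \emph{real} gauge on the Brillouin circle, in which every component $\psi_a(k)$ is real and the vector is unit-normalized. Then the Berry connection $A(k) = i\langle \psi(k) | \partial_k \psi(k)\rangle$ vanishes identically, because reality together with unit-normalization gives $\langle \psi | \partial_k \psi\rangle = \tfrac{1}{2}\partial_k \langle \psi|\psi\rangle = 0$. Hence the Berry phase $\gamma = \oint A\, dk$ receives its entire content from the global boundary condition: propagating the real unit vector continuously around the BZ starting from $|\psi(0)\rangle$, one necessarily returns to $|\psi(2\pi)\rangle = \epsilon_\psi |\psi(0)\rangle$ with $\epsilon_\psi \in \{\pm 1\}$, since at each $k$ the real unit eigenvector of a non-degenerate band is unique up to a sign. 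Passing to a smooth strictly-periodic gauge via the phase factor $e^{ik/2}$ when $\epsilon_\psi = -1$ (and no change when $\epsilon_\psi = +1$) converts this $\mathbb{Z}_2$ twist into a Berry phase of $\pi$ or $0$ respectively.

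For the node statement, I would argue by contrapositive. Suppose $f(k) \equiv \langle \phi(k) | \psi(k)\rangle$ has no zero on the BZ. In the real gauges constructed above, $f$ is a continuous real function, hence of constant sign, which I may assume positive after flipping the overall sign of $\phi$ (a transformation that does not affect its Berry phase). Writing the $\mathbb{Z}_2$ twists $|\psi(2\pi)\rangle = \epsilon_\psi|\psi(0)\rangle$ and $|\phi(2\pi)\rangle = \epsilon_\phi |\phi(0)\rangle$, evaluating $f$ at the two identified boundary points gives $f(2\pi) = \epsilon_\phi \epsilon_\psi f(0)$, so positivity of both sides forces $\epsilon_\phi \epsilon_\psi = +1$. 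By the first part this is $\gamma_\psi = \gamma_\phi$, and contraposition is precisely the claim.

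The main subtlety lies in the first part: keeping cleanly separated the locally real, continuous gauge on the BZ (which may only close up to a sign) from a strictly periodic, possibly complex gauge on the circle, with the Berry phase encoding the mismatch between the two. Once this dichotomy is clear, the node argument reduces to a clean sign-counting; in fact the same reasoning shows that whenever $\gamma_\psi \neq \gamma_\phi$ the number of nodes of $f$ on the BZ is \emph{odd}, slightly sharpening the stated bound.
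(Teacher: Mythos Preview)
Your proof is correct and captures the same $\mathbb{Z}_2$ content as the paper's, but the packaging differs. The paper partitions the Brillouin zone at the zeros of a fixed component $\psi_1$, fixes the real gauge $\psi_1>0$ on each segment, and records the Berry phase as the product $e^{i\gamma_\psi}=\prod_i v_i$ of the sign jumps $v_i=\pm1$ across those nodal points; for the overlap statement it then argues that imposing a node-free $\langle\phi|\psi\rangle>0$ synchronizes $\phi$'s sign jumps with $\psi$'s, forcing equal Berry phases. You instead propagate a single continuous real gauge over $[0,2\pi]$ and encode $\gamma_\psi$ in the boundary holonomy $\epsilon_\psi$, which is nothing but the product of the paper's $v_i$. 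Your route is a bit cleaner in 1D and delivers the odd-node sharpening immediately; the paper's patchwise construction is chosen deliberately to mirror its 2D Chern-number argument, where a globally continuous real gauge is unavailable and one is forced into a Kohmoto-style partition.
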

\begin{proof}
  \normalfont Let us prove the first part first. The Berry phase of a
  band $|\psi(k)\rangle$ is defined as
  \begin{gather}
    \gamma = \int\limits_0^{2\pi} dk \langle \psi(k) | i \partial_k
    \psi(k)\rangle\  \mod\ 2\pi.
  \end{gather}
  Although this definition is not restricted to real bands, in our
  case, $|\psi(k)\rangle$ is a real band.

  Assume the first component of $\psi$, $\psi_1(k)$, has $N_{\psi}$
  zeros at $k_1, k_2, \cdots, k_{N_{\psi}}$. We fix the gauge of
  $|\psi(k)\rangle$ such that $\psi_1(k) > 0$ for
  $k \notin \{k_1, k_2, \cdots, k_{N_{\psi}}\}$, and $\psi_2(k) > 0$
  for $k \in \{k_1, k_2, \cdots, k_{N_{\psi}}\}$. Under this gauge,
  the Berry phase accumulated within each nodeless segment
  $k: k_i + 0^+ \rightarrow k_{i+1} - 0^+$ vanishes. On the two sides
  of a nodal point $k_i$, however, the state $|\psi\rangle$ may differ
  by an overall sign, giving rise to a $\pi$ Berry phase accretion
  across $k_i$. The total Berry phase $\gamma_{\psi}$ is thus
  \begin{gather}
    \label{sm:berry-phi}
    e^{i\gamma_{\psi}} = \prod_{i=1}^{N_{\psi}} v_i\quad , \quad v_i =
    \langle \psi(k_i - 0^+) | \psi(k_i + 0^+)\rangle = \pm 1\ .
  \end{gather}
  This is the analogue of Eq.~\ref{sm:vi} of the 2D case. Thus the Berry
  phase $\gamma_{\psi}$ is quantized to either $0$ or $\pi$.

  We note that the quantized Berry phase is equivalent to the $Z_2$
  index introduced in Ref.~\onlinecite{Fang15}.

  Proof of the second part is very similar to the 2D case, where
  imposing $\langle \phi | \psi\rangle > 0$ induces a gauge for $\phi$
  based on that of $\psi$. If the overlap of two real bands
  $\langle\phi(k) | \psi(k)\rangle$ has no nodes, then imposing
  $\langle \phi(k) | \psi(k)\rangle > 0$ everywhere enforces the sign
  changes of $|\phi(k)\rangle$ to be synchronized with those of
  $|\psi(k)\rangle$ at $\{k_1, k_2, \cdots, k_{N_{\phi}}\}$, so they
  must have the same Berry phase. Thus if their Berry phases are
  different, there must exist node in their overlap.

\end{proof}

\section{A 1D 3-band model with symmetry-protected DQPT}
To illustrate Thm.\ref{sm:thm-1d}, consider the following three band
model in 1D,
\begin{gather}
  \label{ssh3-h}
  H(k,t,m) =
  \begin{pmatrix}
    \sin k + m & \cos k + t \\
    \cos k + t & -\sin k & \cos k + t\\
    & \cos k + t & \sin k - m
  \end{pmatrix}\ .
\end{gather}
This model describes three coupled chains arranged as an $N\times 3$
square lattice, where $N$ is the length of the chains along the $x$
direction (assumed periodic), with horizontal (intra-chain nearest
neighbor) and diagonal (next nearest neighbor) hopping $1$, vertical
(inter-chain nearest neighbor) hopping $t\ge 0$, on-site mass
modulation $m\ge 0$, and a $\pi$ flux per square plaquette. Note that
for $m = 0$ and when projected onto the subspace of the first two
chains, $P=|1\rangle\langle 1 |+ |2\rangle\langle 2|$, the Hamiltonian
reduces to $PH(k)P = \sin k \, \sigma_z + (\cos k + t) \sigma_x$,
which, under the rotation $\sigma_z \rightarrow \sigma_y$, becomes the
celebrated Su-Schrieffer-Heeger model. The central band always has
Berry phase $0$. For infinitesimal $m$, the topological transition
occurs at $t \simeq 1$.

Since $H(k,t,m)$ is real, all three bands are subject to
Thm.~\ref{sm:thm-1d}. For a fixed $m$, the top and bottom bands have
Berry phase $\pi$ if $H(k,t,m)$ is adiabatically connected with
$H(k, t\rightarrow 0, m)$, or Berry phase $0$ if $H(k,t,m)$ is
adiabatically connected with $H(k, t\rightarrow \infty, m)$. 

\subsection{Berry phase for $m \rightarrow 0^+$}
While the Hamiltonian Eq.~\ref{ssh3-h} for generic $t$ and $m$ does
not lend itself easily to closed-form diagonalization, the case with
$m \rightarrow 0^{+}$ can be analytically solved. Topological
classification of generic $t$ and $m$ can then be obtained, in
principle, by adiabatic continuation (i.e. keeping both gaps open).

When $m = 0$, band degeneracy occurs at $k = \pm k_c$ if $t < 1$, with
$k_c \ge 0$ determined by
\begin{gather}
  \cos k_c + t = 0\ .
\end{gather}
For $m \rightarrow 0^+$, one instead has two avoided crossings at $\pm k_c$.

Away from the avoided crossings, the eigenstates are given by
those of $H(k, t, m=0)$,
\begin{gather}
  |\psi^{\pm}_k\rangle =
  \begin{pmatrix}
    \cos k + t \\
    -\sin k \pm \sqrt{2(\cos k + t)^2 + \sin^2 k} \\
    \cos k + t
  \end{pmatrix}\quad , \quad |\psi^3_k\rangle =
  \begin{pmatrix}
    1 \\ 0 \\ -1
  \end{pmatrix}\ , \\
  E_k^{\pm} = \pm \sqrt{2(\cos k + t)^2 + \sin^2 k}\quad , \quad E_k^3 = \sin k\ .
\end{gather}
Note that the eigenstates are \emph{un-normalized}. $E^{\pm}$ and
$E^3$ are the corresponding eigenvalues. Close to, but not at, the
avoided crossings, one can write $k = \pm k_c + \delta$ and expand
$\psi^{\pm}$ to first order of $\delta$,
\begin{gather}
  \label{psi-pm-del}
  |\psi^+_{k_c + \delta}\rangle = -\sin k_c\ \delta
  \begin{pmatrix}
    1 \\ 0 \\ 1
  \end{pmatrix} + \mathcal{O}(\delta^2)
  \quad,\quad 
  |\psi^-_{k_c + \delta }\rangle = -2\sin k_c
  \begin{pmatrix}
    0 \\ 1 \\ 0
  \end{pmatrix}
  - \delta 
  \begin{pmatrix}
    \sin k_c \\ 2 \cos k_c \\ \sin k_c
  \end{pmatrix} 
+ \mathcal{O}(\delta^2) \ ,\\
  |\psi^+_{-k_c + \delta}\rangle = 2\sin k_c
  \begin{pmatrix}
    0 \\ 1 \\ 0
  \end{pmatrix} + \delta
  \begin{pmatrix}
    \sin k_c \\ -2\cos k_c \\ \sin k_c
  \end{pmatrix}
+ \mathcal{O}(\delta^2)
  \quad,\quad 
  |\psi^-_{-k_c + \delta }\rangle = \sin k_c \ \delta
  \begin{pmatrix}
    1 \\ 0 \\ 1
  \end{pmatrix} + \mathcal{O}(\delta^2) \ .
\end{gather}

While the un-normalized $\psi^{\pm}$ are smooth over the entire BZ,
their normalized versions would experience a phase (i.e. gauge)
mismatch near the avoided crossings, which gives rise to Berry
phase. Consider, for example, the Berry phase of $\psi^+$. From the
first equation in Eq.~\ref{psi-pm-del}, the normalized $\psi^+$ is
$|\wt \psi^+_{k_c + \delta}\rangle = -\Sgn(\delta) \smat{1 \\ 0 \\
  1}/\sqrt{2}$. It picks up an overall negative sign going from
$k_c - 0^+$ to $k_c + 0^+$, i.e. a $\pi$ Berry phase change. Anywhere
else its gauge is smooth, therefore $\psi^+$ has a $\pi$ Berry phase.

Similar analysis shows that $\psi^-$ acquires a $\pi$ Berry phase
passing through $-k_c$.

For $t > 1$, $\cos k + t$ never reaches zero, hence $\psi^{\pm}$ both
have a smooth and periodic gauge and therefore their Berry phases are
zero.

The Berry phase of the central band is always zero for any $t$ as
$|\psi^3_k\rangle$ is $k$-independent. This is consistent with the
fact that the total Berry phase of all bands (for any band models) is
always zero.

\subsection{DQPT in quench from topological to trivial phase}
We prepare the pre-quench state by filling the lowest band in the
topological phase $H_I(k) = H(k, t \rightarrow 0, m)$,
\begin{gather}
  |\Psi\rangle = \prod_k (\psi^{-}_k)\dg | \emptyset\rangle\ .
\end{gather}
At time $t=0$, we switch the Hamiltonian to the following
topologically trivial one,
\begin{gather}
  \label{ssh3-trivial}
  H_F(k) = H(k,t= t_F\rightarrow\infty, m) = t_F
  \begin{pmatrix}
    0 & 1 & 0 \\ 1 & 0 & 1 \\ 0 & 1 & 0
  \end{pmatrix}\ .
\end{gather}
Its eigenstates and eigenvalues are
\begin{gather}
  |\phi^{\pm}_k\rangle = \frac{1}{2}
  \begin{pmatrix}
    1 \\ \pm \sqrt{2} \\ 1
  \end{pmatrix}\quad , \quad |\phi^3_k\rangle = \frac{1}{\sqrt{2}}
  \begin{pmatrix}
    1 \\ 0 \\ -1
  \end{pmatrix}\quad , \quad E^{\pm}_k = \pm \sqrt{2}\, t_F\quad , \quad E^3_k = 0\ .
\end{gather}
Fig.~\ref{fig:ssh3-overlap} plots the overlap of the pre-quench band
with all three post-quench eigenstates, verifying that all three
overlaps have node as required by Thm.~\ref{sm:thm-1d}. These nodes in
turn guarantees the DQPT condition
$ \exists k: |\langle \phi^{(n)}(k)|\psi(k)\rangle|^2 \le \frac{1}{2}\
\forall n$.
\begin{figure}
  \centering
  \includegraphics[width=.5\textwidth]{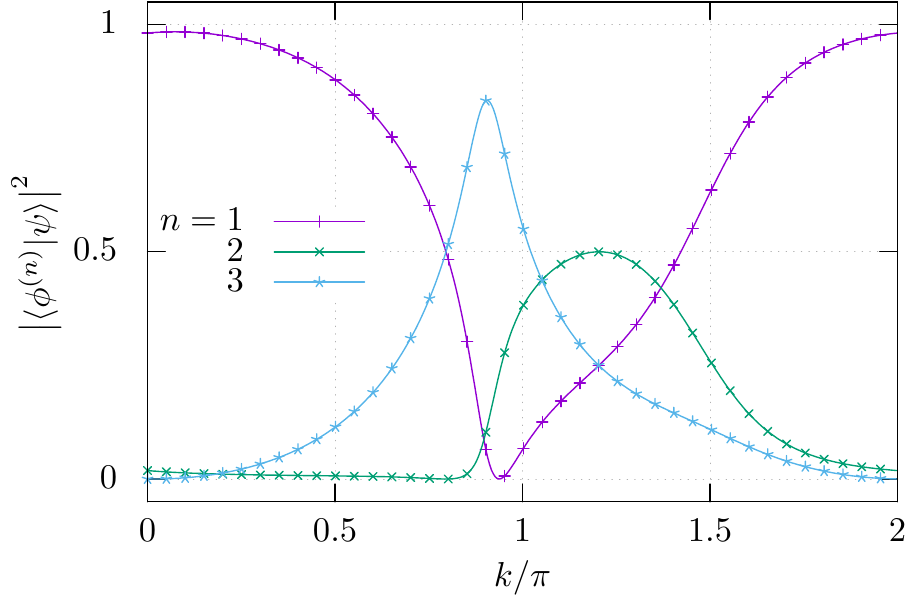}
  \caption{Overlap between pre-quench band and post-quench eigenstates
    for the 3-band model Eq.~\ref{ssh3-h}. Pre-quench parameters are
    $t_i = 0.8, m_i = 0.5$. Its lowest band has Berry phase $\pi$ and
    is used as the pre-quench state. Post-quench Hamiltonian is the
    $t_f \rightarrow \infty$ limit (cf.~Eq.~\ref{ssh3-trivial}), all
    eigenstates $\phi^{(1,2,3)}$ there have $0$ Berry phase. According
    to Thm.~\ref{sm:thm-1d}, all three overlaps must have nodes, which
    can be verified from the plot. The DQPT condition
    $ \exists k: |\langle \phi^{(n)}(k)|\psi(k)\rangle|^2 \le
    \frac{1}{2}\ \forall n$ is thus protected.}
  \label{fig:ssh3-overlap}
\end{figure}
\section{Generalized Hofstadter model}
A generalization of the $q$-band Hofstadter model is
\begin{gather}
  H(k_x,k_y,p,q,t,m) =
  \begin{pmatrix}
    d_1 & v_1 & & & & v_qe^{ik_y} \\
    v_1 & d_2 & v_2 \\
    & v_2 &\ddots &\ddots\\
    & & \ddots & \ddots\\
    & & & & & v_{q-1}\\
    v_qe^{-ik_y} & & & & v_{q-1} & d_q
  \end{pmatrix}\ , \\
  d_a = 2\cos(k_x + a\phi) + a\,m\quad , \quad v_a = 1 +
  2t\cos\left[k_x + (a+\frac{1}{2})\phi\right]\quad , \quad a = 1, 2,
  \cdots, q\quad , \quad \phi = 2\pi \frac{p}{q}\ .
\end{gather}
Here $p$ and $q$ are co-prime integers, $\phi$ is the flux per square
plaquette, the magnetic unit cell is chosen along the $y$ direction,
$t$ is the amplitude of the second neighbor hopping (with nearest
neighbor hopping set to $1$). The Pierls phases along the diagonals
(i.e. of the second neighbor hops) are chosen so that the flux per
triangle is $\phi/2$. Note that for $q = 2$, the off-diagonal matrix
element is $v_1 + v_q e^{ik_y}$. We also impose a sawtooth potential
in the $y$-direction, $V_y = (y \mod q)\times m$, which is
commensurate with the magnetic unit cell. 

\begin{figure}
  \centering
  \includegraphics[width=0.7\textwidth]{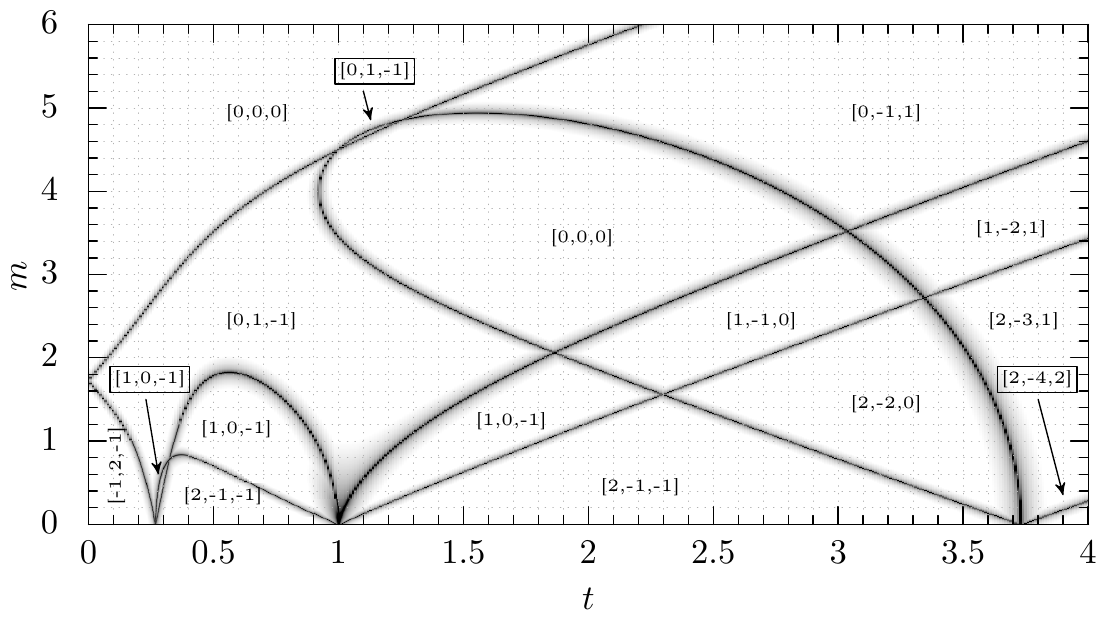}
  \caption{Topological phase diagram of the $3$-band Hofstadter model
    in the $t-m$ space. Black lines are phase boundaries (where at
    least one of the two spectral gaps collapses). Unboxed labels
    denote the Chern numbers (from lower to higher energy bands) of
    the corresponding phases. Boxed labels denote the Chern numbers in
    the phase indicated by the arrows. Only $t\ge 0$ and $m \ge 0$ are
    shown, other quadrants can be obtained by its mirror image, which
    follows from (1) that the Chern numbers remain the same for
    $m \rightarrow -m$, and (2) that the Chern numbers reverse order
    for $t \rightarrow -t$. }
  \label{sm:fig:t-m-phase-diagram}
\end{figure}

\subsection{Phase diagram}
The $3$-band model used in the text amounts to fixing $p=1,
q=3$. Fig.~\ref{sm:fig:t-m-phase-diagram} shows its topological phase
diagram in the $t-m$ plane. We only plot the $t\ge 0$ and $m \ge 0$
quadrant, the other three quadrants can be obtained by taking its
mirror image, noting that (1) the Chern numbers remain the same for
$m \rightarrow -m$, and that (2) the Chern numbers reverse order for
$t \rightarrow -t$.

\subsection{Protected vs unprotected DQPT}
\begin{figure}
  \centering \subfloat[$t_f = 1.4$, No DQPT. \newline Same post-quench phase as
  (b)]{
    \includegraphics[width=.35\textwidth]{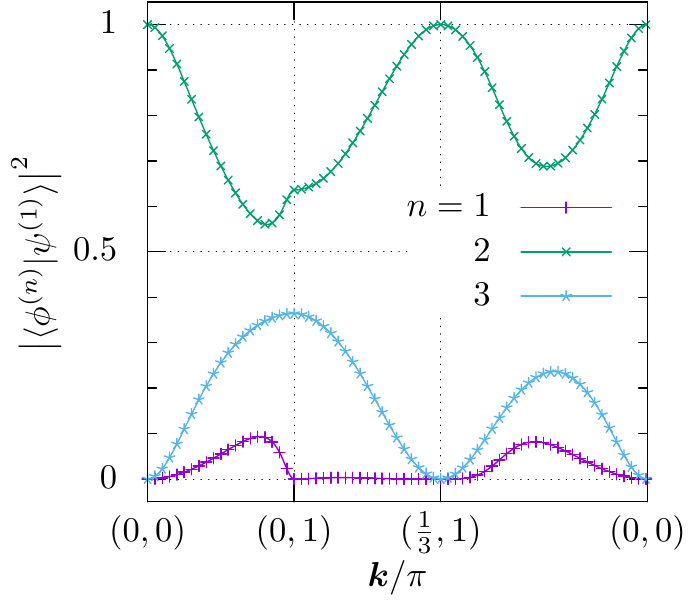}
  }
  \centering
  \subfloat[$t_f = 3$, Unprotected DQPT.\newline Same post-quench phase as (a)]{
    \includegraphics[width=.316\textwidth,trim={20 0 0 0},clip]{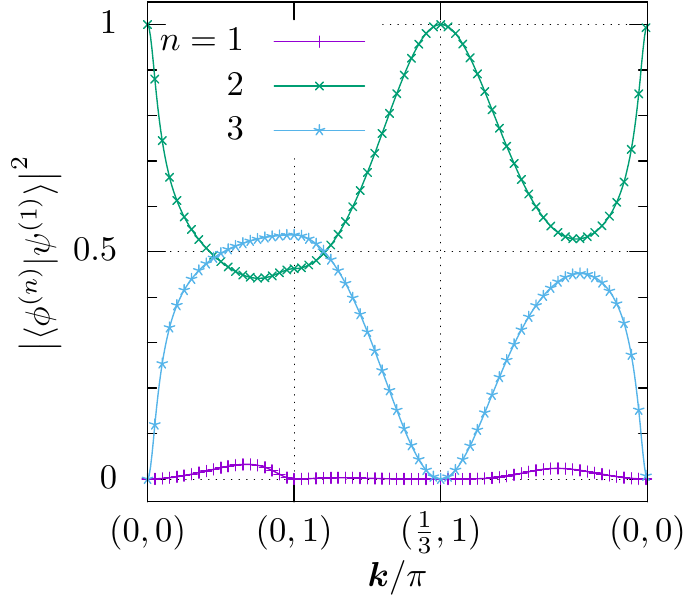}
  }
  \centering
  \subfloat[$t_f = 4$, Protected DQPT. \newline Different post-quench phase from (a) and (b)]{
    \includegraphics[width=.316\textwidth,trim={20 0 0 0},clip]{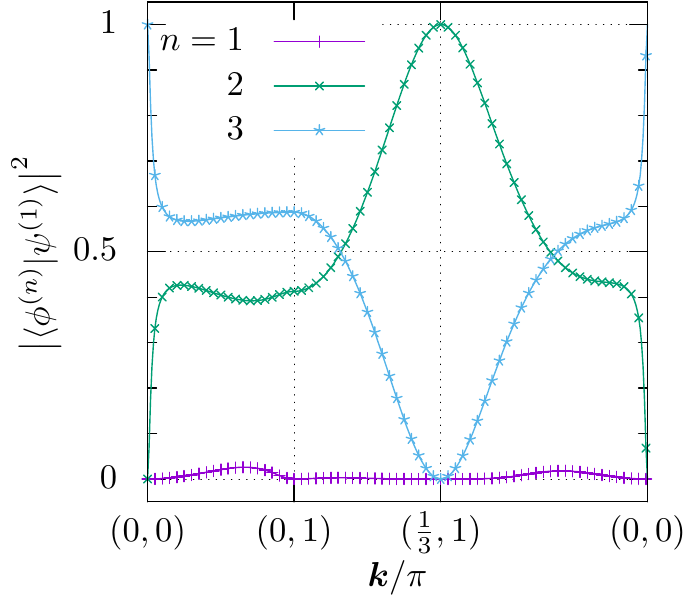}
  }
  \caption{Protected vs unprotected DQPT of the $3$-band generalized
    Hofstadter model. In all three panels, we fix the pre-quench
    Hamiltonian $H_I$ with $t_i = -0.8, m_i = 0$, and use three
    different $t_f$, as shown in panel captions, for the post-quench
    Hamiltonian, while keeping $m_f = 0$. $H_I$ is in the topological
    phase with Chern numbers $[-1,-1,2]$ for its three bands. For
    panels (a) and (b), $H_F$ is in the phase $[2,-1,-1]$, and for
    panel (c), it is in phase $[2,-4,2]$. We plot the overlap of
    $\psi^{(1)}$ with $\phi^{(1,2,3)}$ along high symmetry lines (on
    which overlap minima will occur because we have set
    $m_i = m_f = 0$). Panel (b) illustrates unprotected crossing of
    $n=2$ and $3$ near $\frac{1}{2}$, at which point the DQPT
    amplitude condition
    $\bigl|\langle \phi^{(n)}(\vec k) | \psi^{(1)}(\vec
    k)\rangle\bigr|^2 \le \frac{1}{2} \forall n$ is satisfied, yet
    since the $n=2$ line (green line with cross) never reaches zero,
    this crossing is not robust, and can be removed by tuning $H_I$
    and $H_F$ within their respective phases. This type of avoided
    crossing is shown in (a), note that in this case the $n=2$ line
    lies entirely above $\frac{1}{2}$, hence DQPT is forbidden. In
    (c), all overlaps touch zero, thus the crossing near $\frac{1}{2}$
    and DQPT is robust.}
  \label{sm:fig:overlap-kline}
\end{figure}

While the DQPT amplitude condition only require the existence of at
least one $\vec k$ point at which all overlaps
$\bigl|\langle \phi^{(n)}(\vec k) | \psi(\vec k)\rangle\bigr|^2 \le
\frac{1}{2}\forall n$, it is not a phase-robust feature. This is
illustrated in Panels (a) and (b) in Fig.~\ref{sm:fig:overlap-kline},
where both are in the same pre- and post-quench topological phases,
yet simply by tuning the post-quench parameter $t_f$ the DQPT in (b)
can be avoided. Therefore, a protected DQPT requires the existence of
$\vec k$-space nodes in all overlaps, which would guarantee the
amplitude condition. This is demonstrated in
Fig.~\ref{sm:fig:overlap-kline} (c).

\subsection{Detailed analysis of Fig.~(2) in text}
Here we analyze all quench types shown in Fig.~2 in the main text,
reproduced here in Fig.~\ref{sm:fig:maxmin} for convenience.

\begin{figure}
  \centering
  \includegraphics[width=.7\textwidth]{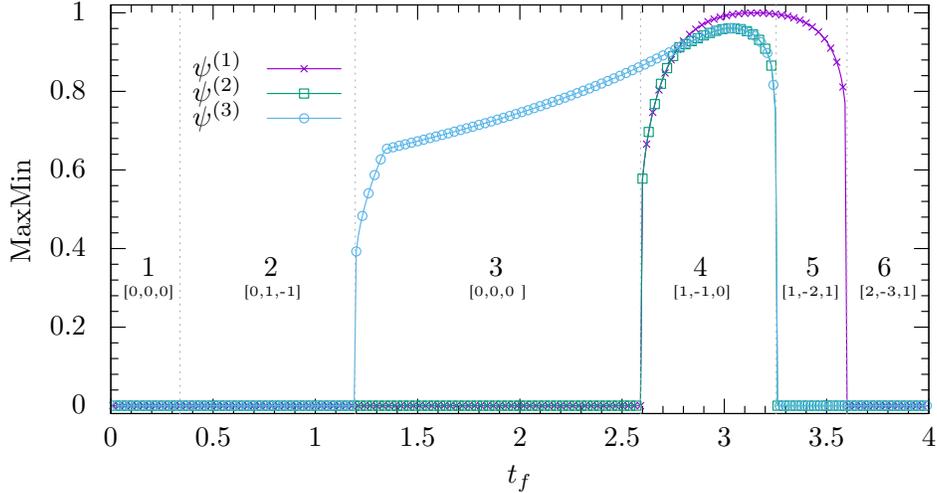}
  \caption{(Color online) Plot of $\psi_{\MaxMin}$ (see main text for
    definition) as functions of the post-quench $t$. Pre-quench state
    is prepared by filling one of the three bands $\psi^{(1,2,3)}$ of
    the $3$-band generalized Hofstadter model with parameters $t_i=3$
    and $m_i=2.8$. Post-quench $H(\vec k)$ has fixed $m_f = 3$ and a
    varying $t_f$, sweeping it through six topological phases labeled
    by its three Chern numbers (ordered from lower to higher
    band). The pre-quench Hamiltonian is in phase 4. A robust DQPT can
    be identified by $\psi_{\MaxMin} = 0$ (see main text). Note that
    $\psi_{\MaxMin}$ changes between zero and non-zero only at phase
    boundaries, verifying robust DQPT as a feature of topological
    phases insensitive to parameter tuning. }
  \label{sm:fig:maxmin}
\end{figure}

There are 18 quench types shown in Fig.~\ref{sm:fig:maxmin} (3 pre-quench
bands, 6 post-quench phases). Their relation with DQPT is summerized
in the table below,
\begin{gather}
  \label{sm:maxmin-table}
  \begin{tabular}{l|c c c c c c}
    & 1 & 2 & 3 & 4 & 5 & 6\\
    & [0,0,0] & [0,1,-1] & [0,0,0] & [1,-1,0] & [1,-2,1] & [2,-3,1]\\
    \hline
    $\psi^{(1)}$ \normalfont{($C=1$) }& Y & A & Y & X & N & A\\
    $\psi^{(2)}$ \normalfont{($C=-1$)}& Y & S & Y & X & Y & Y\\
    $\psi^{(3)}$ \normalfont{($C=0$) }& S & S & N & X & Y & Y
  \end{tabular}\ ,
\end{gather}
where we have
\begin{itemize}
\item \emph{Topological DQPT}: $8$ entries with ``Y''. As discussed in
  the main text, these are quenches with DQPT due to all post-quench
  Chern numbers being different from the Chern number of the
  pre-quench band. Thm.~\ref{sm:thm-2d} then guarantees the overlap nodes.
\item \emph{Symmetry protected DQPT}: $3$ entries with ``S''. As
  discussed in the main text, these are quenches where at least one of
  the post-quench Chern number is identical to that of the pre-quench
  band, yet on high symmetry lines $k_y = 0$ and $\pi$, their overlap
  still exhibits nodes protected by Berry phase. This subclass will be
  discussed in detail in the next section.
\item $3$ entries with ``X'', which denote quenches with no robust
  DQPT because the pre-quench and post-quench Hamiltonians are in the
  same topological phase.
\item $2$ entries with ``N'', which denote quenches with no robust
  DQPT, because the topological transition between the pre-quench and
  post-quench Hamiltonians only involves collapsing a gap which is not
  immediate to the pre-quench band (i.e., the pre-quench band never
  touches other bands over the topological transition). For example,
  when quenching the lowest band $\psi^{(1)}$ toward phase $5$ (with
  final states $\phi^{1,2,3}$), $\psi^{(1)}$ and $\phi^{(1)}$ are
  adiabatically related (i.e. without collapsing the lowest spectral
  gap in $H(\vec k)$), hence it is understandable that there is no
  node in their overlap, which precludes a robust DQPT.
\item $2$ entries with ``A''. As discussed in the main text, these
  quenches exhibit an even number of overlap nodes at $k_y = 0$ and/or
  $\pi$, but these nodes cannot be accounted for by Thms.~\ref{sm:thm-2d}
  and \ref{sm:thm-1d} (see Eq.~\ref{sm:berry} for Berry phases). By
  tuning $t_{i,f}$ and $m_{i,f}$, we were able to shift the nodes
  along $k_x$ as well as to change the total number of nodes by an
  even number, but could not entirely eliminate them. We suspect
  however that they could eventually be eliminated in an enlarged
  parameter space
\end{itemize}

\subsection{Symmetry-protected DQPT}
\label{sm:sec:spdpt-hof}
There are three entries in Eq.~\ref{sm:maxmin-table} (see also
Fig.~\ref{sm:fig:maxmin}) where the occurrence of DQPT is due to
symmetry protection. For example, consider the quench from the highest
band $|\psi^{(3)}\rangle$ of phase $4$ toward phase $1$. The Chern
number of the pre-quench band and all three post-quench bands are $0$,
thus there is no node protected by Thm.~\ref{sm:thm-2d}. However,
since $H(k_x, -k_y) = H(k_x, k_y)^{*}$, the eigenstates of $H$ at
$k_y = 0$ and $\pi$ are real, and can hence be classified by their
respective Berry phases over the $k_x$ loop. Numerically one finds the
Berry phases to be
\begin{gather}
  \label{sm:berry}
  \begin{array}{c||c|c|c|c|c|c}
    & 1 & 2 & 3 & 4 & 5 & 6\\
    \hline
    \psi^{(1)} & 0,0 & 0,0 & 0,0 & 0,\pi & 0,\pi & \pi,\pi\\
    \psi^{(2)} & 0,0 & 0,\pi & \pi,\pi & \pi,0 & 0,0 & \pi,0 \\
    \psi^{(3)} & 0,0 & 0,\pi & \pi,\pi & \pi,\pi & 0,\pi & 0,\pi
  \end{array}\ ,
\end{gather}
where the ordered number pair in each cell denotes the Berry phases at
$k_y = 0$ and $\pi$, respectively, of a state (row) in a topological
phase (column). In Fig.~\ref{sm:fig:maxmin}, the pre-quench
Hamiltonian is in phase $4$, where the two Berry phases of
$\psi^{(3)}$ are both $\pi$, whereas those of the post-quench states
(phase $1$) are all zero. Thus while $\psi^{(3)}$ as well as
$\phi^{1,2,3}$ are all trivial in the quantum Hall sense, they
nonetheless belong to different classes at the high symmetry $k_y$'s,
hence the nodes in their overlaps are protected by
Thm.~\ref{sm:thm-1d}, leading to a protected DQPT.

\end{document}